\def\cramped  
\providecommand{\newoperator}[3]{%
  \newcommand*{#1}{\mathop{#2}#3}}
\newoperator{\ProbOp}{\mathrm{Pr}}{\nolimits}
\newcommand\access{\ensuremath{\mathtt{access}}}
\newcommand\rank{\ensuremath{\mathtt{rank}}}
\newcommand\select{\ensuremath{\mathtt{select}}}
\newcommand\myparagraph[1]{\smallskip\noindent{\bf #1}}
\begin{document}


\setcounter{page}{1}
\title{Optimal Trade-Off for Succinct String Indexes}
 \author{
 R. Grossi\inst{1}
 \and A. Orlandi\inst{1}
 \and R. Raman\inst{2}
 }

\institute{Dipartimento di Informatica,
 Universit\`a di Pisa\\\email{{grossi,aorlandi}@di.unipi.it} \and
 Department of Computer Science, University of Leicester, United Kingdom\\\email{r.raman@mcs.le.ac.uk}}

\maketitle

\begin{abstract}
%
  Let $s$ be a string whose symbols are solely available through
  $\access(i)$, a read-only operation that \emph{probes}~$s$ and
  returns the symbol at position~$i$ in $s$. Many compressed data
  structures for strings, trees, and graphs, require two kinds of
  queries on $s$: $\select(c,j)$, returning the position in~$s$
  containing the $j$th occurrence of $c$, and $\rank(c,p)$, counting
  how many occurrences of $c$ are found in the first $p$ positions of
  $s$.  We give matching upper and lower bounds for this problem,
  improving the lower bounds given by Golynski [\emph{Theor.  Comput.
    Sci.}  387 (2007)] [PhD thesis] and the upper bounds of Barbay et
  al.\mbox{} [SODA 2007].  We also present new results in another
  model, improving on Barbay et al.\mbox{} [SODA 2007] and matching a
  lower bound of Golynski\mbox{} [SODA 2009].  The main contribution
  of this paper is to introduce a general technique for proving lower
  bounds on succinct data structures, that is based on the access
  patterns of the supported operations, abstracting from the
  particular operations at hand.
  For this, it may find application to other interesting problems on succinct data structures.
\end{abstract}


\section{Introduction}
\label{sec:intro}

We are given a read-only sequence $s \equiv s\left[0,n-1\right]$ of
$n$ symbols over an integer alphabet $\Sigma=\left[\sigma\right]
\equiv \{0,1, \ldots, \sigma-1\}$, where $2 \le \sigma \le n$. 
The symbols in $s$ can be read using $\access(i)$, 
for $0 \leq i \leq n-1$: this primitive
\emph{probes}~$s$ and returns the symbol at position~$i$, denoted by
$s\left[i\right]$.
Given the sequence $s$, its
length~$n$, and the alphabet size~$\sigma$, we want to support the
following query operations for a symbol $c \in \Sigma$:
\begin{itemize}
\item $\select(c,j)$: return the position inside $s$ containing the $j$th
  occurrence of symbol $c$, or $-1$ if that occurrence does not exist;
\item $\rank(c,p)$: count how many occurrences of $c$ are found in
  $s\left[0,p-1\right]$.
\end{itemize}

We postulate that an auxiliary data structure, called a \emph{succinct index},
is constructed in a preprocessing step to help answer 
these queries rapidly. In this paper, we study the natural 
and fundamental \emph{time-space} 
tradeoff between two parameters $t$ and $r$ for this problem:
\begin{itemize}
\item $t =$ the \emph{probe complexity}, which is the maximal number of
  probes to $s$ (i.e.\mbox{} calls to $\access$) that the succinct
  index makes when answering a query\footnote{The time %
complexity of our results in the RAM model with %
logarithmic-sized words is linearly proportional to the probe %
complexity. Hence, we focus on the latter.};
\item $r =$ the \emph{redundancy}, which is the number of bits
  required by the succinct index, and does \emph{not} include the space
  needed to represent $s$ itself.
\end{itemize}
Clearly, these queries can be answered in negligible space
but $O(n)$ probes by scanning $s$, or in zero probes by
making a copy of $s$ in auxiliary memory at preprocessing time,
but with redundancy of $\Theta(n \log \sigma)$ bits.  We are interested in 
succinct indices that use few probes, and 
have redundancy $o(n \log \sigma)$, i.e., asymptotically
smaller than the space for $s$ itself. Specifically,
we obtain upper and lower bounds on the redundancy $r \equiv
r(t, n, \sigma)$, viewed as a function of the maximum number $t$ of probes,
the length $n$ of $s$, and the alphabet size $\sigma$.
We assume that $t > 0$ in the rest of the paper.

\subsubsection*{Motivation.} Succinct indices have
numerous applications to problems involving indexing massive 
data sets \cite{BHM07indexes}.  The $\rank$ and $\select$
operations are basic primitives at the heart of 
many sophisticated indexing data structures for strings, 
trees,  graphs, and sets \cite{GMR06large}. 
Their efficiency is crucial to make these indexes fast and space-economical.
Our results are most interesting for the case of
``large'' alphabets, where $\sigma$ is a not-too-slowly
growing function of $n$.  Large alphabets are common in
modern applications: e.g. many files are in
Unicode character sets, where $\sigma$ is of the
order of hundreds or thousands. Inverted lists or documents in
information retrieval systems can be seen as sequences $s$ of words,
where the alphabet $\Sigma$ is obviously large and increasing with
the size of the collection (it is the vocabulary 
of distinct words appearing over the entire document repository).

\myparagraph{Our results.}
Our first contribution is showing that the redundancy $r$ in bits 
\begin{equation}
  \label{eq:redundancy}
  r(t,n,\sigma) = \Theta\left(\frac{ n\log \sigma } {t}\right)
\end{equation}
is tight for any succinct index solving our problem, for $t = O(\log
\sigma / \log\log \sigma)$. (All the logarithms in this paper are to
the base~2.) We provide matching upper and lower bounds for this range
of values on $t$, under the assumption that $O(t)$ probes are allowed
for $\rank$ and $\select$, i.e.\mbox{} we ignore multiplicative
constant factors. 
The result is composed by a lower bound of $r = \Omega(\frac{n \log\sigma} t)$
bits that holds for $t = o(\log \sigma)$ and by an upper bound of
$r = O(\frac{n \log \sigma} t + n \log \log\sigma)$. 
We also provide a lower bound of $r = \Omega(\frac{n \log t}{t})$ for $t = O(n)$, 
thus leaving open what the optimal redundancy when $t = \Omega( \frac{\log \sigma}{\log\log\sigma} )$.
Running times for the upper bound are $O(t + \log \log \sigma)$ for $\rank$ and $O(t)$ for $\select$.

An interpretation of~\eqref{eq:redundancy} is that, given a data
collection $D$, if we want to build an additional succinct index on $D$
that saves space by a factor $t$ over that taken by $D$, we have to
pay $\Omega(t)$ access cost for the supported queries. 
Note that the plain storage of the
sequence $s$ itself requires $n \log \sigma$ bits. 
Moreover, our
result shows that it is suboptimal to build $\sigma$ individual
succinct indexes (like those for the binary-alphabet case,
e.g.~\cite{RRR07}), one per symbol $c \in \left[\sigma\right]$: the
latter approach has redundancy $\Theta(\frac{\sigma n \log t}{t})$
while the optimal redundancy is given in eq.~(\ref{eq:redundancy}),
when $t = O(\log \sigma / \log\log \sigma)$.

Lower bounds are our main findings, while the matching upper bounds
are derived from known algorithmic techniques.  Thus, our second
contribution is a general technique that extends the algorithmic
encoding/decoding approach in~\cite{DLO03linear} in the sense 
that it abstracts from the specific query operation at hand, and focuses on its access
pattern solely. For this, we can single out a sufficiently large,
conflict free subset of the queries that are classified as
\emph{stumbling} or $z$-\emph{unique}. In the former case, we extract
direct knowledge from the \emph{probed} locations; in the latter, the
novelty of our approach is that we can extract (implicit) knowledge
also from the \emph{unprobed} locations. We are careful not to exploit
the specific semantics of the query operations at this stage. As a
result, our technique applies to other kinds of query operations for
predecessor, prefix sum, permutation, and pattern searching problems,
to name a few, as long as we can extract a sufficiently large subset
of the queries with the aforementioned features. We will
discuss them extensively in the full version.


We also provide further running times for the rank/select 
problem. For example, if
$\sigma = (\log n)^{O(1)}$, the $\rank$ operation requires only $O(t)$
time; also, we can get $O(t \log \log \sigma \log^{(3)} \sigma)$ time\footnote{We define %
  $\log^{(1)} x := \log_2 x$ and for integer $i \ge 2$, %
  $\log^{(i)} x := \log_2 (\log^{(i-1)} x)$.}  for $\rank$ and $O(t
\log \log \sigma)$ time for $\select$ (Theorem~\ref{thm:ub}). We also
have a lower bound of $r = \Omega\bigl(\frac{n \log t}{t}\bigr)$ bits
for the redundancy when $1 \leq t \leq n/2$, which leaves open what is the optimal
redundancy when $t = \Omega(\log \sigma)$. 
As a corollary, we can obtain an entropy-compressed
data structure that represents~$s$ using $nH_k(s) + O(\frac{n\log
  \sigma}{\log \log \sigma})$ bits, for any $k = o(\frac{\log_\sigma
  n}{\log \log \sigma})$, supporting $\access$ in $O(1)$ time, $\rank$
and $\select$ in $O(\log \log \sigma)$ time (here, $H_k(s)$ is the
$k$th-order empirical entropy).

%


\myparagraph{Related work.}
Succinct data structures are generally divided into two kinds,
\emph{systematic} and \emph{non-systematic} \cite{GM07}.
Non-systematic data structures encode the input data and any auxiliary
information together in a single representation, while systematic
don't. The concept of succinct indexes applies to systematic ones;
moreover, the concept of probes does not apply to non-systematic ones,
since the input data is not distinguished from the index.
In terms of time-space trade-off, our results extend the complexity
gap between systematic and non-systematic succinct data structures
(which was known for $\sigma=2$) to any integer alphabet of size
$\sigma \leq n$.  This is easily seen by considering the case of
$O(1)$ time/probes for $\select$. Our systematic data structure
requires $r=O(n \log \sigma)$ bits of redundancy whereas the
non-systematic data structure of~\cite{GMR06large} uses just $O(n)$
bits of redundancy.  However, if the latter should also provide
$O(1)$-time $\access$ to the encoded string, then its redundancy
becomes $O(n \log \sigma)$.  Note that eq.~(\ref{eq:redundancy}) is
targeted for non-constant alphabet size $\sigma$ whereas, for constant
size, the lower and upper bounds for the $\sigma = 2$ case
of~\cite{golynski07} can be extended to obtain a matching bound of
$\Omega(\frac{n \log t}{t})$ bits (see Appendix~\ref{apdx:extending}).


The conceptual separation of the index from the input data was
introduced to prove lower bounds in \cite{GM07}. It was then
explicitly employed for upper bounds in
\cite{FerraginaV07,GonzalezN06,SadakaneG06}, and was fully formalized
in \cite{BHM07indexes}. The latter contains the best known upper
bounds for our problem\footnote{We compare ourselves with the improved %
bounds given in the full version of \cite{BHM07indexes}.},
i.e.\mbox{} $O(s)$ probes for $\select$ and $O(s
\log k)$ probes for $\rank$, for any two parameters $s \leq \log
\sigma/\log\log \sigma$ and $k \leq \sigma$, with redundancy $O(n\log
k + n (1/s+1/k) \log \sigma)$. For example, fixing $s=k=\log\log
\sigma$, they obtain $O(\log\log \sigma)$ probes for $\select$ and
$O(\log \log \sigma \log^{(3)} \sigma)$ probes for $\rank$, with
redundancy $O(n \log \sigma / \log \log \sigma)$. By
eq.~(\ref{eq:redundancy}), we get the same redundancy with $t=O(\log
\log \sigma)$ probes for both $\rank$ and $\select$. Hence, our probe
complexity for $\rank$ is usually better than \cite{BHM07indexes}
while that of $\select$ is the same.  Our $O(\log \log \sigma)$
running times are all better when compared to $O((\log \log \sigma)^2
\log^{(3)} \sigma)$ for $\rank$ and $O((\log \log \sigma)^2)$ for
$\select$ in \cite{BHM07indexes}.

\section{General Technique}
\label{sec:technique}
This section aims at stating a general lower bound technique, of
independent interest, which applies not only to both $\rank$ and
$\select$  but to other query operations as well.  Suppose we
have a set $S$ of strings of length $n$, and a set $Q$ of queries that
the must be supported on $S$ using at most $t$ probes each and an
unknown amount $r$ of redundancy bits.  Under certain assumptions on
$S$ and $Q$, we can show a lower bound on $r$.  Clearly, any choice of
$S$ and $Q$ is allowed for the upper bound.

\myparagraph{Terminology. }We now give a framework that relies on a
simple notion of entropy $H(S)$, where $H(X) = \lceil \log |X| \rceil$
for any class of $|X|$ combinatorial objects~\cite{CoverThomas}.  The
framework extends the algorithmic encoding/decoding approach
\cite{DLO03linear}.  Consider an arbitrary algorithm $\mathcal A$ that
can answer to any query in $Q$ performing at most $t$ probes on any $s
\in S$, using a succinct index with $r$ bits.  We describe how to
encode $s$ using $\mathcal A$ and the succinct index as a black box,
thus obtaining $E(s)$ bits of encoding. Then, we describe a decoder
that knowing $\mathcal A$, the index of $r$ bits, and the latter
encoding of $E(s)$ bits, is able to reconstruct $s$ in its original
form.  The encoding and decoding procedure are allowed unlimited (but
finite) computing time, recalling that $\mathcal A$ can make at most
$t$ probes per query.

The lower bound on $r$ arises from the necessary condition $\max_{s\in
  S} E(s) + r \geq H(S)$, since otherwise the decoder cannot be
correct. Namely, $r \geq H(S) - \max_s E(s)$: the lower $E(s)$, the
tighter the lower bound for $r$.  Our contribution is to give
conditions on $S$ and $Q$ so that the above approach can hold for a
variety of query operations, and is mostly oblivious of the specific
operation at hand since the query access pattern to $s$ is
relevant. This appears to be novel.

First, we require $S$ to be sufficiently dense, that is, $H(S) \geq n \log \sigma - \Theta(n)$.
Second, $Q$ must be a subset of $\left[\sigma\right] \times \left[n\right]$, so that
the first parameter specifies a character $c$ and the second one an integer $p$.
Elements of $Q$ are written as $q_{c,p}$. Third, answers to queries must be within $[n]$.
The set $Q$ must contain a number of stumbling or $z$-unique queries, as we define now.
Consider an execution of $\mathcal A$ on a query $q_{c,p} \in Q$ for a string $s$. The set of
accessed position in $s$, expressed as a subset of $\left[n\right]$ is called
an \emph{access pattern}, and is denoted by $\text{Pat}_s(q_{c,p})$. 


First, \emph{stumbling} queries imply the occurrence of a certain symbol $c$ inside their
own access pattern: the position of $c$ can be decoded by using just the answer and
the parameters of the query.
Formally, $q_{c,p} \in Q$ is stumbling if there exists a computable function $f$
that takes in input $c, p$ and the answer of $q_{c,p}$ over $s$, and outputs
a position $x \in \text{Pat}_s(q_{c,p})$ such that $s[x] = c$.
The position $x$ is called the \emph{target} of $q_{c,p}$.
The rationale is that the encoder does not need to store any information
regarding $s[x] = c$, since $x$ can be extracted by the decoder from
$f$ and the at most $t$ probed positions by $\mathcal A$.
We denote by $Q'_s \subseteq Q$ the set of stumbling queries over $s$.

Second, \emph{$z$-unique} queries are at the heart of our technique, where
$z$ is a positive integer. Informally, they have specific answers
implying unique occurrences of a certain symbol $c$ in a segment of
$s$ of length $z+1$.  Formally, a set $U$ of answers is
\emph{$z$-unique} if for every query $q_{c,p}$ having answer in $U$,
there exists a \emph{unique} $i \in [p,p+z]$ such that $s[i] = c$
(i.e.\mbox{} $s[j] \neq c$ for all $j \in [p,p+z], j \neq i$). 
A query $q_{c,p}$ having answer in $U$ is called $z$-unique and the
corresponding position $i$ is called the \emph{target} of $q_{c,p}$.
Note that, to our purposes, we will restrict to the cases where $H(U) = O(n)$.
The rationale is the
following: when the decoder wants to rebuild the string it must
generate queries, execute them, and test whether they are $z$-unique by
checking if their answers are in $U$.  Once that happens, it can infer
a position $i$ such that $s[i] = c$, even though such a position is
not probed by the query.  We denote by $Q''_s(z) \subseteq Q \setminus
Q'_s$ the set of $z$-unique queries over $s$ that are \emph{not}
stumbling.
We also let $\text{Tgt}_s(q_{c,p})$ denote the target of query $q_{c,p}$ over $s$, if it exists, and
let $\text{Tgt}_s(Q) = \cup_{q \in Q} Tgt_s(q)$ for any set of queries $Q$.

\myparagraph{Main statement. }We now state our main theorem.
Let $S$ be a set of strings such that $H(S) \geq n\log\sigma - \Theta(n)$.
Consider a set of queries $Q$ that can be answered by performing at most $t$ probes per query and using $r$ bits
of redundancy.
\begin{theorem}
\label{thm:generalbound}
For any $z \in \left[\sigma\right]$, let $\lambda(z) = \min_{s \in S} |\text{Tgt}_s(Q'_s) \cup \text{Tgt}_s(Q''_s(z))|$.
Then, there exists integers $\gamma$ and $\delta$ with $\min\{\lambda(z), n\}/(15t) \leq \gamma + \delta \leq \lambda(z)$,
such that any succinct index has redundancy 
$$r \geq \gamma\log\left(\frac \sigma z \right) + \delta \log\left(\frac{\sigma\delta}{t|Q|}\right) - \Theta(n)$$
\end{theorem}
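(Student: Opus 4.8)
The plan is to build an encoding of an arbitrary $s \in S$ whose length falls short of $H(S)$ by at least the claimed amount, so that the index must make up the difference. Fix $s$ achieving the minimum in $\lambda(z)$, and let $T = \mathrm{Tgt}_s(Q'_s) \cup \mathrm{Tgt}_s(Q''_s(z))$, so $|T| \geq \lambda(z)$. The first step is a greedy ``conflict-free'' selection: we want a subset of queries whose targets are distinct positions and whose access patterns do not interfere with each other's decoding. Since each query probes at most $t$ positions and there are $|Q|$ queries total, a standard greedy/independent-set argument on the ``conflict graph'' (two queries conflict if one's access pattern or target overlaps the other's target or probes) lets us extract a set of at least $\min\{\lambda(z),n\}/(15t)$ queries with pairwise-disjoint targets, all contained in $T$. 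We split these into $\gamma$ stumbling queries (targets $G$) and $\delta$ $z$-unique non-stumbling queries (targets $D$); this gives the bounds $\min\{\lambda(z),n\}/(15t) \le \gamma+\delta \le \lambda(z)$.

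The encoder then writes down: (i) the string $s$ with the $\gamma+\delta$ target positions \emph{deleted}, costing $(n-\gamma-\delta)\log\sigma + O(n)$ bits for the content plus $O(n)$ bits to mark which positions were removed (e.g.\ via an $O(n)$-bit description of the subset, using $H(S)\ge n\log\sigma - \Theta(n)$ to charge the marking to the slack); (ii) for each of the $\delta$ $z$-unique targets, the \emph{offset} $i - p \in [0,z]$ of the target within its window, costing $\log(z+1)$ bits each, i.e.\ roughly $\delta\log z$; (iii) for the $\delta$ $z$-unique queries, enough information to let the decoder recognize membership in $U$ — but since we restrict to $H(U) = O(n)$, describing $U$ costs only $O(n)$ bits; (iv) $O(n)$ bits of bookkeeping to order the decoding and disambiguate which query is which. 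The stumbling targets carry \emph{no} symbol cost: the decoder recovers $s[x]=c$ for free from $f$, the query parameters, and the $t$ probed positions (all of which lie in the already-decoded part by the conflict-free property). That accounts for the $\gamma$ term.

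The decoder reconstructs $s$ by processing queries in the fixed order: for a stumbling query it runs $\mathcal A$ (all probes hit known positions), reads the answer, applies $f$ to get the target position $x$ and fills in $s[x]=c$; for a $z$-unique query it runs $\mathcal A$, checks the answer is in $U$, then uses the stored offset to locate the unique $i\in[p,p+z]$ with $s[i]=c$ and fills it in. Summing, $\max_s E(s) \le (n-\gamma-\delta)\log\sigma + \delta\log z + \Theta(n)$, whence by $r \ge H(S) - \max_s E(s) \ge n\log\sigma - \Theta(n) - E(s)$ we obtain $r \ge (\gamma+\delta)\log\sigma - \delta\log z - \Theta(n) = \gamma\log\sigma + \delta\log(\sigma/z) - \Theta(n)$. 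To sharpen the $\gamma$ term to $\gamma\log(\sigma/z)$ and the $\delta$ term to $\delta\log(\sigma\delta/(t|Q|))$, we further observe that (a) we may uniformly pay only $\log(\sigma/z)$ per stumbling target by folding a $\log z$-bit-per-target surplus into the $O(n)$ budget when $\gamma z = O(n)$, or more carefully by interleaving stumbling and $z$-unique handling; and (b) the $\delta$ $z$-unique queries must themselves be \emph{named} so the decoder knows which to run — naming $\delta$ queries out of $|Q|$ costs $\log\binom{|Q|}{\delta} \le \delta\log(e|Q|/\delta)$ bits, and since each query does $\le t$ probes this interacts with the window structure to yield the extra $\log(t|Q|/(\sigma\delta))$ loss per query, i.e.\ the $+\delta\log(\sigma\delta/(t|Q|))$ term (which may be negative, in which case the bound is vacuous and nothing is claimed).

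The main obstacle is the conflict-free extraction in the right quantity: one must define the conflict relation so that (1) disjointness of targets is guaranteed, (2) every probe of a selected query lands on a position the decoder already knows at the time it processes that query (requiring a consistent processing order, hence an acyclicity condition on the conflict structure), and (3) the surviving set still has size $\ge \min\{\lambda(z),n\}/(15t)$ — the constant $15$ suggests a careful accounting where each selected query ``kills'' at most a bounded number of others through its $\le t$ probes and $1$ target, and the factor absorbs the overlap between the stumbling and $z$-unique target sets and the two directions of conflict. Getting the bookkeeping terms to all collapse into a single $\Theta(n)$ (using $H(U)=O(n)$ and $H(S)\ge n\log\sigma-\Theta(n)$ crucially) is the other delicate point, but it is routine once the selection lemma is in hand.
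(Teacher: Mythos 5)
Your outline correctly identifies the broad structure --- extract a conflict-free set of stumbling and $z$-unique queries, delete their targets from the explicit encoding of $s$, reconstruct by running the queries on the decoder side, and pay $O(n)$ for $U$ and for subset-marking. However, one central idea is missing, and without it the approach provably cannot produce the stated bound.

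The problem is how the decoder learns \emph{which} $z$-unique queries to run. You propose to name them directly as a subset of $Q$, at cost $\log\binom{|Q|}{\cdot}$. Tracking this through the arithmetic (and fixing the labels so that, as in the theorem, $\gamma$ counts $z$-unique queries and $\delta$ counts stumbling ones --- you have them swapped), your encoding yields at best
$r \geq \gamma\log\!\bigl(\sigma\gamma/(z|Q|)\bigr) + \delta\log\!\bigl(\sigma\delta/(t|Q|)\bigr) - \Theta(n)$,
which carries an extra $-\gamma\log(|Q|/\gamma)$ that is absent from the theorem. This is not harmless slack: in the proof of Theorem~\ref{thm:rankbound} one takes $|Q| = n\sigma/z$, so $\sigma\gamma/(z|Q|) = \gamma/n \le 1$ and your $z$-unique term becomes non-positive --- the bound evaporates precisely where it is needed. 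The paper explicitly flags this failure mode (``if $Q$ is much larger, savings are not guaranteed'') and its fix is the \emph{archetype} construction, which your proposal does not contain and which is the actual technical core of the proof.

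The archetype idea is this: rather than transmit $Q^*$, the encoder picks the $\prec$-minimal conflict-free query set $Q^A$ that has the same stumbling queries, the same multiset of positional arguments $p$, the same access pattern $\text{Pat}_s(Q^*)$, and whose remaining queries are all $z$-unique. The decoder is given only $O(n)$ bits of side information --- $\text{Pat}_s(Q^A)$ as a subset of $[n]$, the positions $p$ as a multiset over $[n]$, the split of $Q^A$'s slots into stumbling vs.\ $z$-unique (a subset of $[\gamma+\delta]$, and $\gamma+\delta \le n$ since targets are distinct positions), and $U$ --- and recovers $Q^A$ by exhaustive enumeration in canonical order, running each candidate against the already-decoded trace and discarding it if it leaves the access pattern, conflicts, or fails the $U$-test. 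The crucial point is that the \emph{character} arguments of the $z$-unique queries are never transmitted, so the $\log\binom{|Q|}{\gamma}$ term vanishes and the clean $\gamma\log(\sigma/z)$ term survives. Your ``observation (a)'' (absorb $\log z$ per target into $O(n)$ when $\gamma z = O(n)$) rests on a condition that need not hold, and ``observation (b)'' (an ``interaction with the window structure'') gestures at the correct final expression without exhibiting an encoding that achieves it; neither substitutes for the archetype. The conflict-handling and the $\min\{\lambda(z),n\}/(15t)$ count you describe are essentially the paper's Lemma~\ref{lem:iterprocess}, and that part is fine.
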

The proof goes through a number of steps, each dealing with a different issue and is deferred to Section~\ref{sec:proving}.

\myparagraph{Applications. } 
We now apply Theorem~\ref{thm:generalbound} to our two main problems, for an alphabet size $\sigma \leq n$.
\begin{theorem}
  \label{thm:rankbound}
  Any algorithm solving
  $\rank$ queries on a string $s \in \left[\sigma\right]^n$ using at most $t = o (\log \sigma)$ character probes (i.e.\mbox{}
  $\access$ queries), requires a succinct
  index with $r = \Omega\bigl(\frac { n \log \sigma} {t} \bigr)$ bits
  of redundancy.
\end{theorem}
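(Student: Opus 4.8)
The plan is to obtain Theorem~\ref{thm:rankbound} as a direct instantiation of Theorem~\ref{thm:generalbound}, so that the problem reduces to a single non-routine estimate, a lower bound on $\lambda(z)$. Fix $z=\lfloor\sqrt{\sigma}\rfloor$ (any $z$ with $z\le\sigma^{1-\Omega(1)}$ will do, the only requirement being $\log(\sigma/z)=\Theta(\log\sigma)$), and let $S$ be the family of \emph{block-locally-distinct} strings: cut $[n]$ into consecutive blocks of $z+1$ positions and let $S$ consist of all strings in which each block carries $z+1$ pairwise distinct symbols. Counting block by block, $|S|=\bigl(\sigma(\sigma-1)\cdots(\sigma-z)\bigr)^{n/(z+1)}\ge(\sigma-z)^{n}$, so $H(S)\ge n\log(\sigma-z)=n\log\sigma-\Theta(n)$ and $S$ is dense as required; crucially $S$ excludes the degenerate strings (such as $c^{n}$) that would make $\lambda(z)$ vanish. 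For $Q$ take all rank queries, so $|Q|\le\sigma n$. Note that for a dense family the value of $\rank(c,p)$ is essentially independent of the content of any window of fixed length (it depends only on the arbitrary prefix $s[0,p-1]$), so there is no useful $z$-unique answer set; accordingly we take $U=\emptyset$, only the \emph{stumbling} queries contribute, and the conclusion of Theorem~\ref{thm:generalbound} reduces to $r\ge\gamma\log(\sigma/z)-\Theta(n)$ with $\gamma\ge\min\{\lambda(z),n\}/(15t)$.

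The heart of the proof is to show $\lambda(z)=\Omega(n)$: for \emph{every} $s\in S$ and \emph{every} algorithm $\mathcal A$ answering $\rank$ with at most $t$ probes, a constant fraction of the $n$ positions of $s$ are targets of stumbling rank queries from $Q$. For a position $i$ lying in a block $B$ with $s[i]=c$, the relevant query is $\rank(c,i+1)$: its answer, together with the value of $\rank(c,\cdot)$ at the left end of $B$, reveals that $c$ occupies position $i$ inside $B$, since $c$ occurs at most once in $B$. One has to prove that on a constant fraction of such positions $i$ the access pattern of $\mathcal A$ on $\rank(c,i+1)$ already contains a cell holding the symbol $c$; by block-local distinctness that occurrence is then unambiguously located inside its own block, making the query stumbling with target $i$ (or with that occurrence). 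This is the main obstacle, and the step where one must exclude a ``clever'' $\mathcal A$ that answers many such queries without ever probing the queried symbol. That exclusion is carried out inside the encoding framework by an adversary/counting argument: if $\mathcal A$ avoided the symbol $c$ on more than a small fraction of the windows, then perturbing the symbols at those positions within $S$ would yield more strings that look identical to $\mathcal A$ (same probed symbols, same index) than it could possibly distinguish, contradicting correctness; equivalently, each such un-probed occurrence can be recovered by replaying $\mathcal A$, need not be stored, and so must be charged to $r$.

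Finally, substituting $\lambda(z)=\Omega(n)$ and $\gamma\ge\min\{\lambda(z),n\}/(15t)=\Omega(n/t)$ into Theorem~\ref{thm:generalbound},
\[
  r\;\ge\;\gamma\,\log\!\Bigl(\frac{\sigma}{z}\Bigr)-\Theta(n)
   \;=\;\Omega\!\Bigl(\frac{n}{t}\Bigr)\cdot\Theta(\log\sigma)-\Theta(n)
   \;=\;\Omega\!\Bigl(\frac{n\log\sigma}{t}\Bigr),
\]
where the last equality uses $t=o(\log\sigma)$, which makes $n\log\sigma/t=\omega(n)$ and absorbs the additive $\Theta(n)$. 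This is precisely the claimed bound, and it is the estimate $\lambda(z)=\Omega(n)$ that I expect to require the bulk of the work.
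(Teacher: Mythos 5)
Your plan sets $U=\emptyset$, throws away the $z$-unique mechanism entirely, and then tries to prove $\lambda(z)=\Omega(n)$ from \emph{stumbling} queries alone. This is exactly backwards from the paper and it does not work. A $\rank(c,p)$ query has no obligation to probe a cell containing~$c$: the algorithm can probe positions of its own choosing, and there is no principle forcing it to ``stumble.'' Consequently the set $\text{Tgt}_s(Q'_s)$ can be tiny (even empty) for a well-designed algorithm, and with $Q''_s(z)=\emptyset$ you would get $\lambda(z)\approx 0$, at which point Theorem~\ref{thm:generalbound} yields nothing. You actually notice this --- you call it ``the main obstacle'' and the need to ``exclude a clever $\mathcal A$'' --- but then you wave at an informal adversary/counting argument that lives \emph{outside} the framework of Theorem~\ref{thm:generalbound}. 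That is the gap: within the framework, the only device for charging un-probed occurrences to the encoding is precisely the $z$-unique notion, and you have discarded it.

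Your motivating remark, that ``$\rank(c,p)$ depends only on the arbitrary prefix $s[0,p-1]$, so there is no useful $z$-unique answer set,'' is also false in the relevant sense. The paper never applies $z$-uniqueness to a raw $\rank$ value; it uses the \emph{differenced} query $q_{c,p}=\rank(c,p+z)-\rank(c,p)$, which depends exactly on a window of length $\Theta(z)$. With $S$ taken to be concatenations of permutations of $[\sigma]$ and $z<\sigma$, each such window contains $c$ at most once, so the answer set $U=\{1\}$ is $z$-unique, and \emph{every} position $i$ with $s[i]=c$ is the target of its window's query: either $\mathcal A$ stumbles on $c$ (stumbling query), or it does not, in which case the answer is $1$ and the query is $z$-unique. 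That dichotomy is what gives $\lambda(z)\ge n$ unconditionally, for any algorithm. Your block-locally-distinct family could be made to serve the same purpose, but only if you reinstate the differenced queries and $U=\{1\}$; as written, the claim $\lambda(z)=\Omega(n)$ is unsupported and the proof does not go through. (A further, smaller issue: the paper's choice $z=\sigma^{3/4}\sqrt t$ and the subsequent case split on $\delta$ are there to control the $\delta\log(\sigma\delta/(t|Q|))$ term, which reappears once $z$-unique queries are used; your $z=\sqrt\sigma$ would then need to be re-examined.)
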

\begin{proof}
We start by defining the set $S$ of strings.
For the sake of presentation, suppose $\sigma$
divides $n$. An arbitrary string $s \in S$ is the concatenation of
$n/\sigma$ \emph{permutations} of $\left[\sigma\right]$. Note that $|S| =
(\sigma!)^{n/\sigma}$ and so we have
$H(S) \geq n\log \sigma - \Theta(n)$ bits
(by Stirling's approximation).

Without loss of generality, we prove the bound on a derivation of the
$\rank$ problem. We define the set $Q$ and fix the parameter $z = 
\sigma^{3/4}\sqrt{t}$, so that the queries are $q_{c,p} = \rank(c,p+z)
- \rank(c,p)$, where $c \in \left[\sigma\right]$ and $p \in
\left[n\right]$ with $p \bmod z \equiv 0$.  In this setting, the
$z$-unique answers are in $U = \{1\}$.  Indeed, whenever $q_{c,p} =
1$, there exists just one instance of $c$ in $s[p,p+z]$.  Note
that $|Q| = n \sigma / z > n$, for $\sigma$ larger than some constant.

Observe that $\lambda(z) \geq n$, as each position $i$ in $s$ such
that $s[i] = c$, is the target of exactly one query $q_{c,p}$:  supposing
the query is not stumbling, such a query is surely $z$-unique.  By
Theorem~\ref{thm:generalbound}, $\gamma + \delta \geq n/(30t)$ since
 a single query is allowed to make up to $2t$ probes now. (This
causes just a constant multiplicative factor in the lower bound.)

Having met all requirements, we apply Theorem~\ref{thm:generalbound},
and get
\begin{equation}
  \label{eq:rank}
  r \geq \gamma \log\left(\frac \sigma z\right) - \delta \log\bigl(\frac{n t}{z \delta}\bigr)
\end{equation}

We distinguish between two cases.  If $\delta \leq n / \sigma^{1/4}$,
then $\delta \log ( (nt)/ (z\delta)  ) \leq \frac n {\sigma^{1/4}}
\log \bigl( \frac{nt\sigma^{1/4}}{n z} \bigr) \leq \frac{n}{2
  \sigma^{1/4}} \log (t/\sigma)$, since $\delta \log ( 1 / \delta )$
is monotone increasing in $\delta$ as long as $\delta \leq
\lambda(z)/2$ (and $n / \sigma^{1/4} \leq \lambda(z)/2$ for
sufficiently large $\sigma$).  Hence, recalling that $t = o(\log
\sigma)$, the absolute value of the second term on the right hand
of~\eqref{eq:rank} is $o(n/t)$ for $\sigma$ larger than a constant.
Moreover, $\gamma \geq n/(30t) - \delta \geq n / (60t)$ in this
setting, so that the bound in~\eqref{eq:rank} reduces to
$$r \geq \frac n {240t} \log \sigma - \frac n {120t} \log t - \Theta(n) = \frac n {240t} \log \sigma - \Theta(n).$$

In the other case, we have $\delta \geq n / \sigma^{1/4}$, and $\delta\log ( (nt)/(z\delta) ) \leq \delta \log \bigl( \frac{\sigma^{1/4} t } {\sigma^{3/4} \sqrt{t}} \bigr)= \frac {\delta}{2} \log(t/\sigma)$.
Therefore, we know in~\eqref{eq:rank} that $\gamma\log(\sigma/z) +
(\delta/2) \log (\sigma / t) \geq \frac 1 2 (\gamma+\delta)
\log(\sigma / z)$, as we chose $z \geq t$.
Again, we obtain
$$r \geq \frac n {120t} \log \sigma - \Theta(n).$$
In both cases, the $\Theta(n)$ term is negligible as $t = o(\log
\sigma)$, hence the bound.\qed
\end{proof}


\begin{theorem}
  \label{thm:selectbound}
  Any algorithm solving
  $\select$ queries on a string $s \in \left[\sigma\right]^n$ using at most $t = o (\log \sigma)$ character probes (i.e.\mbox{}
  $\access$ queries), requires a succinct
  index with $r = \Omega\bigl(\frac { n \log \sigma} {t} \bigr)$ bits
  of redundancy.
\end{theorem}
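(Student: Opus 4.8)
The plan is to mirror the structure of the proof of Theorem~\ref{thm:rankbound}, keeping the same hard instance $S$ (concatenations of $n/\sigma$ permutations of $[\sigma]$, so $H(S)\ge n\log\sigma-\Theta(n)$) and adapting only the query family $Q$ so that it is expressible via $\select$ and still produces $z$-unique (or stumbling) queries whose targets cover essentially all of $s$. First I would observe that a single occurrence-query ``is there a $c$ in $s[p,p+z]$?'' can be simulated by $\select$: letting $a=\rank(c,p)$ be the number of $c$'s before position $p$, the window $s[p,p+z]$ contains a $c$ iff $\select(c,a+1)\le p+z$. The obstacle is that the index only answers $\select$, not $\rank$, so $a$ is not directly available. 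I would circumvent this by folding $a$ into the query parameter: define, for each $c\in[\sigma]$ and each block index $k$ (with $p=kz$ and $z=\sigma^{3/4}\sqrt t$ as before), the query $q_{c,p}$ whose evaluation is ``compute $j$ such that $\select(c,j)$ is the first occurrence of $c$ at or after position $p$, then return $1$ if $\select(c,j)\le p+z$ and $0$ otherwise''. The value $j$ is determined by $s$ and by $(c,p)$, and recovering it costs only probes to the $\select$ structure — but note the decoder, who owns $s$ and the index, can also compute $j$ by itself, so this is legitimate within the framework of Section~\ref{sec:technique}. The $z$-unique answer set is again $U=\{1\}$: if $s[p,p+z]$ contains exactly one $c$ then the query returns $1$, and whenever it returns $1$ there is at least one $c$ in the window; restricting attention to blocks that happen to contain a unique $c$ gives a $z$-unique query with target that occurrence. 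Since the blocks partition $[n]$ and within each block each symbol appears at most once (blocks are aligned to, and finer than, the permutation boundaries once $z\mid\sigma$, or at worst each symbol appears $O(1)$ times per block), every position $i$ with $s[i]=c$ is the target of the query $q_{c,\,z\lfloor i/z\rfloor}$, which is either stumbling or $z$-unique; hence $\lambda(z)\ge n$ (up to a constant factor absorbed into the $1/(15t)$ slack, as in the rank proof).

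Next I would invoke Theorem~\ref{thm:generalbound} with these choices. We have $|Q|=n\sigma/z>n$, $H(U)=O(1)=O(n)$, and $H(S)\ge n\log\sigma-\Theta(n)$, so all hypotheses are met; the simulation uses $O(t)$ probes to the $\select$ index per query (one $\select$ call plus, if needed, a binary search over the count $j$, each of whose $\select$ sub-calls is $t$ probes — this inflates the per-query probe budget by an $O(\log\sigma)$ factor, which is too much). To avoid that blow-up I would instead restructure the query so that $j$ is handed to it for free: enlarge the index $Q$ to range over $(c,p,j)$ triples but only keep those triples for which $j$ is the correct rank-value at the block containing the eventual target — equivalently, observe that the \emph{decoder} knows $s$ and can always supply the right $j$, so the query the decoder actually issues is just the single call $\select(c,j)$, costing exactly $t$ probes, and the answer is ``$z$-unique'' precisely when $\select(c,j)\in[p,p+z]$ and $\select(c,j-1)<p$ (the latter also computable by one more $\select$). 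Thus $2t$ probes suffice, exactly as in the rank proof, giving $\gamma+\delta\ge n/(30t)$.

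From here the calculation is word-for-word the one in Theorem~\ref{thm:rankbound}: Theorem~\ref{thm:generalbound} yields
\[
  r \;\ge\; \gamma\log\!\left(\frac{\sigma}{z}\right) - \delta\log\!\left(\frac{nt}{z\delta}\right),
\]
and the same two-case split on whether $\delta\le n/\sigma^{1/4}$ (where the second term is $o(n/t)$ and $\gamma\ge n/(60t)$) or $\delta\ge n/\sigma^{1/4}$ (where $\delta\log(nt/(z\delta))\le\frac\delta2\log(t/\sigma)$ because $z=\sigma^{3/4}\sqrt t$, and $z\ge t$ lets us write $\gamma\log(\sigma/z)+\frac\delta2\log(\sigma/t)\ge\frac12(\gamma+\delta)\log(\sigma/z)$) gives $r\ge\Omega\!\left(\frac{n}{t}\log\sigma\right)$, with the $\Theta(n)$ error absorbed since $t=o(\log\sigma)$.

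The main obstacle I anticipate is the one flagged above: packaging an occurrence-in-a-window test using \emph{only} $\select$ without spending $\Theta(t\log\sigma)$ probes, and doing so in a way that still fits the rigid template of Theorem~\ref{thm:generalbound} — in particular, ensuring that the function $f$ witnessing the stumbling/$z$-unique property depends only on $(c,p)$ and the answer, and that the target positions genuinely tile $s$ so that $\lambda(z)=\Omega(n)$. The clean resolution is to exploit that, in the encoding/decoding game, the decoder holds $s$ outright and may therefore precompute the rank-value $j$ offline at zero probe cost, reducing the effective query to a bare $\select(c,j)$ call plus one boundary check; once that observation is in place the rest is identical to the $\rank$ case and I would simply refer back to that proof for the two-case arithmetic rather than repeat it.
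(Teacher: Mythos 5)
Your proposal contains a fatal error at the point it leans on most heavily: you claim, twice, that ``the decoder, who owns $s$ \dots can also compute $j$ by itself'' and that ``the decoder holds $s$ outright and may therefore precompute the rank-value $j$ offline at zero probe cost.'' The decoder does \emph{not} hold $s$; reconstructing $s$ from the index plus an encoding of fewer than $H(S)$ bits is precisely what the decoder must accomplish, and this is where the lower bound comes from. If the decoder were handed $s$, the quantity $\max_s E(s)$ could be zero and the inequality $r \geq H(S) - \max_s E(s)$ would be vacuous. Consequently the rank-value $j$ needed in your simulated window-query cannot be ``supplied for free'': either you spend $\Theta(t \log\sigma)$ probes binary-searching for it via $\select$ (which you correctly flag as too expensive), or you explicitly encode one $j \in [n/\sigma]$ per query, which costs up to $\Theta((n/t)\log(n/\sigma))$ bits and can overwhelm the $\Theta(n)$ slack in Theorem~\ref{thm:generalbound}. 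Also, enlarging $Q$ to triples $(c,p,j)$ violates the hypothesis of Theorem~\ref{thm:generalbound} that $Q \subseteq [\sigma]\times[n]$, which the decoding procedure relies on when enumerating candidate archetype sets.

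The paper sidesteps all of this by not attempting to recreate the $\rank$-difference trick at all. It uses $S = \{$ strings with each symbol occurring exactly $n/\sigma$ times $\}$ and $Q = \{\select(c,p) : c \in [\sigma],\ p \in [n/\sigma]\}$, and observes that every such query is already \emph{stumbling}: if $\select(c,p)=x$ then $s[x]=c$, so the required function $f$ is simply the identity on the answer. There is nothing to simulate and no $z$-unique machinery is needed; one sets $z=1$, gets $\lambda(z)=n$ and $|Q|=n$, and Theorem~\ref{thm:generalbound} yields $r \geq \frac{n}{15t}\log(\sigma/t^2) - \Theta(n)$, which is $\Omega(n\log\sigma/t)$ since $t=o(\log\sigma)$. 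You come tantalizingly close when you reduce the query to ``a bare $\select(c,j)$ call,'' but you classify it as $z$-unique and keep the (broken) $j$-supply scaffolding; recognizing it as a stumbling query with $f=\mathrm{id}$ collapses the whole argument to a few lines.
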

\begin{proof}
The set $S$ of strings
is composed by \emph{full strings}, assuming that $\sigma$ divides~$n$.
A full string contains each character exactly $n / \sigma$
times and, differently from Theorem~\ref{thm:rankbound}, has no restrictions on where they
can be found.
Again, we have $H(S) \geq n \log \sigma - \Theta(n)$.

The set $Q$ of queries is $q_{c,p} = \select(c,p)$, where $p \in [n/\sigma]$, and all queries in
$Q$ are stumbling ones, as $\select(c,i) = x$ immediately implies that $s[x] = c$
(so $f$ is the identity function).
There are no $z$-unique queries here, so we can fix any value of $z$:
we choose $z=1$. 
It is immediate to see that $\lambda(z) = n$, and $|Q| = n$, as there are only
$n/\sigma$  queries for each symbols in $[\sigma]$.
By Theorem~\ref{thm:generalbound}, we know that $\gamma+\delta \geq n/(15t)$.
Hence, the bound is
$$r \geq \gamma \log\sigma + \delta \log \bigl(\frac{\sigma \delta}{n t}\bigr) \geq \frac n {15t} \log (\sigma / t^{2}) - \Theta(n).$$

\noindent Again, as $t = o(\log\sigma)$ the latter term is negligible and the bound follows.
\qed
\end{proof}



\section{Proof of Theorem \ref{thm:generalbound}}
\label{sec:proving}
We give an upper bound on $E(s)$ for any $s \in S$ by describing an 
encoder and a decoder for $s$. In this way we can use the relation $\max_{s \in S} E(s) + r \geq H(S)$
to induce the claimed lower bound on $r$ (see Section~\ref{sec:technique}).
We start by discussing how we can use
$z$-unique and stumbling queries to encode a single position and its content compactly. 
Next, we will
deal with conflicts between queries: not all queries in $Q$ are useful for encoding.
We describe a mechanical way to select a sufficiently large subset of $Q$ so that
conflicts are avoided. Bounds on $\gamma$ and $\lambda$ arise from such a process.
To complete the encoding, we present how to store the parameters of the queries
that the decoder must run. 

\myparagraph{Entropy of a single position and its content. }
We first evaluate the entropy of positions and their contents by exploiting the knowledge of $z$-unique and stumbling queries.
We use the notation $H(S|\Omega)$ for some event $\Omega$ as a shortcut for $H(S')$ where $S' = \{ s \in S | s \text{ satisfies } \Omega\}$.

\begin{lemma}
\label{lem:zEntropy}
For any $z \in \left[\sigma\right]$, let  $\Omega_{c,p}$
be the condition ``$q_{c,p}$ is $z$-unique''. Then it holds $H(S)-H(S|\Omega_{c,p}) \geq \log(\sigma/z) - O(1)$.
\end{lemma}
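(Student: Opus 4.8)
The plan is to turn the entropy inequality into a counting statement and then into a one‑line cardinality bound. Writing $S':=\{s\in S:\Omega_{c,p}\text{ holds}\}$, we have $H(S)=\lceil\log|S|\rceil$ and $H(S\mid\Omega_{c,p})=\lceil\log|S'|\rceil$, so $H(S)-H(S\mid\Omega_{c,p})\ge\log(|S|/|S'|)-1$; hence it suffices to prove $|S'|\le O\!\bigl(\tfrac{z}{\sigma}\bigr)|S|$.

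To bound $|S'|$ I would use only the defining property of $z$‑uniqueness: if $q_{c,p}$ is $z$‑unique on $s$, then (its answer being in $U$) there is a unique target $i=\text{Tgt}_s(q_{c,p})$ lying in the window $W:=[p,p+z]$, which has at most $z+1$ positions, with $s[i]=c$. Grouping the strings of $S'$ according to the value of this target position yields
$$S'\ \subseteq\ \bigcup_{i\in W}\{s\in S:s[i]=c\},\qquad\text{so}\qquad |S'|\ \le\ (z+1)\cdot\max_{i\in W}\bigl|\{s\in S:s[i]=c\}\bigr|.$$
(Uniqueness is not even essential — grouping by the least witness works — but it is the natural hook into the paper's framework, where the decoder will be handed the offset $i-p$ of the target within the window.) It remains to show that no position is monochromatically heavy, i.e.\ $|\{s\in S:s[i]=c\}|=O(|S|/\sigma)$.

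This is the only place the structure of $S$ is needed. In both applications $S$ is perfectly symbol‑balanced at every position: for $S$ a concatenation of $n/\sigma$ permutations of $[\sigma]$, relabelling the permutation block containing $i$ gives $|\{s\in S:s[i]=c\}|=|S|/\sigma$; for $S$ the set of strings in which each symbol occurs exactly $n/\sigma$ times, swapping position $i$ with a position carrying some other fixed symbol gives the same identity. Feeding $|\{s\in S:s[i]=c\}|\le|S|/\sigma$ into the display gives $|S'|\le(z+1)|S|/\sigma$, whence $H(S)-H(S\mid\Omega_{c,p})\ge\log\bigl(\tfrac{\sigma}{z+1}\bigr)-1=\log(\sigma/z)-O(1)$.

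The main obstacle is exactly that balance bound. The bare density hypothesis $\log|S|\ge n\log\sigma-\Theta(n)$ only yields the trivial $|\{s\in S:s[i]=c\}|\le\sigma^{n-1}\le 2^{\Theta(n)}\,|S|/\sigma$, which would downgrade the conclusion to $\log(\sigma/z)-\Theta(n)$; that slack is actually harmless downstream (Theorem~\ref{thm:generalbound} applies the lemma to at most $n$ pairwise‑disjoint windows and already tolerates a $\Theta(n)$ additive loss), but to obtain the clean $-O(1)$ one genuinely wants (near‑)balancedness of $S$ — which every instance in the paper supplies for free. I would therefore either fold balancedness into the standing assumptions on $S$, or prove directly the aggregated form "conditioning on $\gamma$ queries over pairwise‑disjoint windows all being $z$‑unique drops the entropy by at least $\gamma\log(\sigma/z)-\Theta(n)$", which is what the encoding of Section~\ref{sec:proving} ultimately uses.
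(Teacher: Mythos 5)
Your proof is essentially the same as the paper's: you group the strings in $(S\mid\Omega_{c,p})$ by the position of the target inside the length-$(z{+}1)$ window $[p,p+z]$, and then bound the number of strings with $c$ at a fixed position by $|S|/\sigma$, exactly as the paper states (``there are at most $z+1$ candidate target cells \ldots and at most $|S|/\sigma$ possible strings with [a given] position containing $c$''). The elementary reduction from the entropy inequality to the cardinality bound $|S'|\le(z+1)|S|/\sigma$ is also identical. So this is the right proof, written more carefully than the paper's two-line version.

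Where your write-up adds genuine value is the observation that the step $|\{s\in S:s[i]=c\}|\le|S|/\sigma$ is \emph{not} a consequence of the standing hypothesis $H(S)\ge n\log\sigma-\Theta(n)$: that density bound alone gives only $|\{s\in S:s[i]=c\}|\le\sigma^{n-1}\le 2^{\Theta(n)}|S|/\sigma$, which would downgrade the lemma to $\log(\sigma/z)-\Theta(n)$. The balancedness is an additional structural property of $S$ that the paper uses implicitly; it does hold for both families $S$ the paper instantiates (concatenations of permutations in Theorem~\ref{thm:rankbound}, full strings in Theorem~\ref{thm:selectbound}). Your final remark is also well-placed: Lemma~\ref{lem:zEntropy} and Lemma~\ref{lem:stumblingEntropy} only motivate the per-target cost, while the actual argument in Lemma~\ref{lem:queryencode} gets the aggregate $\gamma\log(\sigma/z)-O(\gamma)$ saving directly from the encoding (store $\gamma\log z+O(\gamma)$ bits of window offsets against $\gamma\log\sigma$ bits of omitted characters), without ever invoking balancedness of $S$; the $\Theta(n)$ additive loss in Theorem~\ref{thm:generalbound} comes from the other $O(n)$ terms. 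So the slack you flagged is indeed harmless downstream, but a careful restatement of the standing assumptions on $S$ (adding near-balancedness) would make Lemma~\ref{lem:zEntropy} itself unconditionally correct as stated.
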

\begin{proof}
Note that set $(S|\Omega_{c,p}) = \{ s \in S: \text{Tgt}_s(q_{c,p}) \text{ is
  defined on } s\}$ for a given query $q_{c,p}$.
It is $|(S|\Omega_{c,p})| \leq (z+1) \sigma^{n-1}$ since there at most $z+1$ candidate target cells compatible with $\Omega_{c,p}$
and at most $|S|/\sigma$ possible strings with position containing~$c$ at a fixed position.
So, $H(S|\Omega_{c,p}) \leq \log(z+1) + H(S) - \log\sigma$, hence the bound.
\qed
\end{proof}

\begin{lemma}
\label{lem:stumblingEntropy}
Let $\Omega'_{c,p}$ be the condition
``$q_{c,p}$ is a stumbling query''.  Then, it holds that $H(S)-H(S|\Omega'_{c,p}) \geq \log(\sigma/t) - O(1)$.
\end{lemma}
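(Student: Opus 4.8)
My plan is to follow the template of the proof of Lemma~\ref{lem:zEntropy}: bound the cardinality of the restricted family $(S\mid\Omega'_{c,p})=\{s\in S: q_{c,p}\text{ is stumbling on }s\}$ and then convert this into the desired entropy gap using $H(X)=\lceil\log|X|\rceil$. The extra difficulty compared with the $z$-unique case is that, for a stumbling query, the target cell is not confined to a fixed window of $z+1$ positions but only to the set $\text{Pat}_s(q_{c,p})$ of at most $t$ probed cells, and that set depends on $s$. So rather than bounding against a fixed candidate set, I will open up the decision tree that $\mathcal A$ follows on the query $q_{c,p}$.

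Concretely, when $\mathcal A$ runs on $q_{c,p}$, let $\pi_1$ be the (fixed) first probed position, and for $i\ge2$ let $\pi_i(s)$ be the $i$-th probed position, a deterministic function of the earlier outcomes $s[\pi_1],\dots,s[\pi_{i-1}(s)]$; there are at most $t$ such probes. If $q_{c,p}$ is stumbling on $s$, its target $x=\text{Tgt}_s(q_{c,p})$ satisfies $s[x]=c$ and lies in $\text{Pat}_s(q_{c,p})$, so $x=\pi_d(s)$ for some $d\le t$; taking $d$ minimal makes $\pi_d(s)$ \emph{new}, i.e.\ distinct from the previously probed positions. Hence $(S\mid\Omega'_{c,p})\subseteq\bigcup_{d=1}^{t}E_d$, where $E_d=\{s\in S:\pi_d(s)\text{ is new and }s[\pi_d(s)]=c\}$.

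The crux is to show $|E_d|\le |S|/(\sigma-t)$ for every $d$. I would partition $E_d$ according to the value-prefix $(s[\pi_1],\dots,s[\pi_{d-1}(s)])=(v_1,\dots,v_{d-1})$; fixing this prefix pins down the $\le d-1$ distinct positions probed so far and also fixes $\pi_d(s)$ to a position $w=w(v_1,\dots,v_{d-1})$. For prefixes for which $w$ is new, the same uniformity of $S$ exploited in Lemma~\ref{lem:zEntropy} (each length-$\sigma$ block is a permutation of $[\sigma]$, respectively each symbol occurs exactly $n/\sigma$ times) bounds the conditional probability that $s[w]=c$ by $1/(\sigma-(d-1))\le 1/(\sigma-t)$, since at most $d-1\le t-1$ cells are already frozen. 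Averaging over the prefixes, whose probabilities sum to one, preserves the bound $1/(\sigma-t)$; crucially, this weighted average — rather than a union bound over the $\sigma^{d-1}$ possible prefixes — is what prevents a spurious $2^{t}$ factor from entering. Summing over $d=1,\dots,t$ yields $|(S\mid\Omega'_{c,p})|\le \frac{t}{\sigma-t}\,|S|$.

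It then remains to translate: $H(S)-H(S\mid\Omega'_{c,p})\ge\log|S|-\log|(S\mid\Omega'_{c,p})|-1\ge\log\frac{\sigma-t}{t}-1$, and since $t=o(\log\sigma)$ forces $t\le\sigma/2$ once $\sigma$ exceeds a constant, $\log\frac{\sigma-t}{t}\ge\log\frac{\sigma}{2t}=\log(\sigma/t)-1$, which is $\log(\sigma/t)-O(1)$. I expect the middle step to be the main obstacle: handling the fact that the candidate targets are revealed adaptively by the probes, and organizing the count by probe depth with a weighted average so that the cost stays $t/\sigma$ rather than ballooning to $2^{t}/\sigma$. Everything else is bookkeeping parallel to Lemma~\ref{lem:zEntropy}.
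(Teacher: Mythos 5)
Your proof is correct and follows essentially the same approach as the paper's: the paper also observes that the target must be one of the $t$ probed positions, in temporal order, and that remembering the probe index $d$ suffices since the probed locations are determined once earlier probe outcomes are fixed. The paper's version is terser (it defers to Golynski's thesis and says the count goes ``as in Lemma~\ref{lem:zEntropy} with $t$ choices instead of $z+1$''), whereas you carefully spell out the partition by probe depth and value-prefix with the weighted-average step that keeps the factor at $t/\sigma$ rather than $\sigma^{t}/\sigma$ --- a detail the paper elides but which is exactly the intended argument.
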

\begin{proof}
The proof for this situation is already known from~\cite{golynski-thesis}.
In our notation, the proof goes along the same lines as that of Lemma~\ref{lem:zEntropy},
except that we have $t$ choices instead of $z+1$.
To see that, let $m_1, m_2, \ldots, m_t$ be the positions, in temporal order, probed by the algorithm $\mathcal A$ 
on $s$ while answering $q_{c,p}$.
Since the query is stumbling, the target will be one of $m_1, \ldots, m_t$. It suffices
to remember which one of the $t$ steps probe that target, since their values $m_1, \ldots, m_t$ are
deterministically characterized given $\mathcal A$, $s$, $q_{c,p}$.  \qed
\end{proof}

\myparagraph{Conflict handling. } 
In general, multiple instances of Lemma~\ref{lem:zEntropy} and/or Lemma~\ref{lem:stumblingEntropy}
cannot be applied independently.
We introduce the notion of conflict on the targets and show how to circumvent this difficulty.
Two queries $q_{b,o}$ and $q_{c,p}$ \emph{conflict} on $s$ if at least one
of the following three condition holds:
$(i)$ $\text{Tgt}_s(q_{c,p}) \in \text{Pat}_s(q_{b,o})$, $(ii)$ $\text{Tgt}_s(q_{b,o}) 
\in \text{Pat}_s(q_{c,p})$, $(iii)$ $\text{Tgt}_s(q_{c,p}) = \text{Tgt}_s(q_{b,o})$.
A set of queries where no one conflicts with another is called \emph{conflict free}.
The next lemma is similar to the one found in~\cite{golynski09}, but the context is different.

Lemma~\ref{lem:iterprocess} defines a lower bound on the maximum size of a
conflict free subset of $Q$. We use an iterative procedure that maintains at each $i$th step
a set $Q^*_i$ of conflict free queries and a set $C_i$ of available targets, such that
no query $q$ whose target is in $C_i$ will conflict with any query $q' \in Q^*_{i-1}$.
Initially, $C_0$ contains all targets for the string $s$, so that by definition $|C_0| \geq \lambda(z)$.
Also, $Q^*_0$ is the empty set.
\begin{lemma}
\label{lem:iterprocess}
Let $i \geq 1$ be an arbitrary step and assume $|C_{i-1}| > 2|C_0|/3$. Then, there
exists $Q^*_i$ and $C_i$ such that (a) $|Q^*_i| = 1 + |Q^*_{i-1}|$, (b) $Q^*_i$ is
conflict free, (c) $|C_i| \geq |C_0| - 5it \geq  \lambda(z) - 5it$.
\end{lemma}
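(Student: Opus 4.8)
\myparagraph{Proof plan.} The plan is to recast the informal iterative procedure as an induction on $i$, carrying along the one invariant that makes each step legal. Before starting, I would fix a \emph{canonical} query $q_y$ for every target value $y \in C_0$, with $\text{Tgt}_s(q_y)=y$ (one exists by the definition of $C_0$), and I would only ever insert canonical queries into the sets $Q^*_i$; this does not reduce the size of $Q^*_i$, which equals $i$ in any case. The invariant $(\ast)$ I would maintain at every step $j$ is: $Q^*_j$ is conflict free; every query of $Q^*_j$ has its target outside $C_j$; and for every $y\in C_j$ the canonical query $q_y$ does not conflict with any query of $Q^*_j$. For $j=0$ this is vacuous, so the lemma amounts to showing that one step preserves $(\ast)$ and meets (a)--(c) whenever $|C_{i-1}|>2|C_0|/3$.

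At step $i$ I would pick a target $x^\star\in C_{i-1}$ that is probed by few canonical queries, set $q:=q_{x^\star}$ and $Q^*_i:=Q^*_{i-1}\cup\{q\}$, and then prune from $C_{i-1}$ precisely the targets whose canonical query would conflict with $q$. By the definition of conflict, a target $y$ is dangerous iff $y=x^\star$ (type (iii)), or $y\in\text{Pat}_s(q)$ (type (i)), or $x^\star\in\text{Pat}_s(q_y)$ (type (ii)); so I would set $C_i:=C_{i-1}\setminus\big(\{x^\star\}\cup\text{Pat}_s(q)\cup\{y:x^\star\in\text{Pat}_s(q_y)\}\big)$. The first set costs $1$ target and the second at most $t$; the only delicate point is to choose $x^\star$ so that the third also costs $O(t)$. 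Since each of the $|C_0|$ canonical queries probes at most $t$ positions, $\sum_{x\in C_0}|\{y\in C_0:x\in\text{Pat}_s(q_y)\}|\le t\,|C_0|$, so by Markov's inequality at least $|C_0|/2$ values $x$ are probed by at most $2t$ canonical queries; since $|C_{i-1}|>2|C_0|/3$, this majority meets $C_{i-1}$ in more than $|C_0|/6\ge\lambda(z)/6>0$ targets, and I would take $x^\star$ there. Then at most $1+t+2t\le 5t$ targets are pruned (using $t\ge1$), so $|C_i|\ge|C_{i-1}|-5t\ge|C_0|-5it\ge\lambda(z)-5it$ by the inductive bound on $|C_{i-1}|$, which gives (c). This averaging step --- and the use of the density hypothesis $|C_{i-1}|>2|C_0|/3$ to locate a usable $x^\star$ --- is the heart of the argument and the main obstacle; alternatively one can average directly over $C_{i-1}$, which lowers the constant and removes the hypothesis, but the version above is what the stated bound requires.

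It remains to verify (a), (b) and $(\ast)$, which are now mechanical. For (b): $q=q_{x^\star}$ is canonical with target $x^\star\in C_{i-1}$, so $(\ast)$ for step $i-1$ says it does not conflict with any query of the conflict free set $Q^*_{i-1}$, hence $Q^*_i$ is conflict free. For (a): every earlier-selected query has its target outside $C_{i-1}$ by $(\ast)$, whereas $q$ has target $x^\star\in C_{i-1}$, so $q\notin Q^*_{i-1}$ and $|Q^*_i|=1+|Q^*_{i-1}|$. For $(\ast)$ at step $i$: the targets of $Q^*_{i-1}$ avoid $C_i\subseteq C_{i-1}$ and $x^\star=\text{Tgt}_s(q)$ was explicitly removed, so all targets of $Q^*_i$ avoid $C_i$; and for $y\in C_i$, $q_y$ does not conflict with $Q^*_{i-1}$ by induction, nor with $q$, since $y$ survived all three prunings, i.e.\ $y\ne x^\star$, $y\notin\text{Pat}_s(q)$ and $x^\star\notin\text{Pat}_s(q_y)$, which excludes conflict types (iii), (i) and (ii) respectively. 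This closes the induction.
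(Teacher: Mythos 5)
Your proof is correct and takes essentially the same approach as the paper's: average to find a target in $C_{i-1}$ that is probed by $O(t)$ of the (implicitly canonical) queries, add a query with that target, and prune the $O(t)$ potentially conflicting targets; you merely average over $C_0$ with a Markov step and then intersect with $C_{i-1}$, where the paper averages over $C_{i-1}$ directly, and you helpfully make explicit both the one-query-per-target assignment and the induction invariant that the paper leaves implicit. One small slip in your closing aside: averaging directly over $C_{i-1}$ does not remove the hypothesis $|C_{i-1}| > 2|C_0|/3$, since the resulting bound of roughly $t|C_0|/|C_{i-1}|$ probes on the chosen target is only $O(t)$ because $|C_{i-1}|$ is a constant fraction of $|C_0|$.
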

\begin{proof}
We first prove that there exists $u \in C_{i-1}$ such that no more than $3t$ queries probe $u$. Assume by contradiction
that for any $u$, at least $3t$ queries probe $u$. Then, we would collect $3t|C_{i-1}| > 2|C_0|t$ probes in total. However,
any query can probe at most $t$ cells, summing up to $|C_0|t$, giving a contradiction.
At step $i$, we choose $u$ as a target, say, of query $q_{c,p}$ for
some $c,p$. This maintains invariant (a) as $Q^*_i = Q^*_{i-1} \cup
\{q_{c,p}\}$. As for invariant (b), we remove the potentially conflicting targets from $C_{i-1}$, and produce $C_{i}$. Let $I_{u} \subseteq C_{i-1}$ be
the set of targets for queries probing $u$ over $s$, where by the above properties $|I_u| \leq 3t$.
We remove $u$ and the elements in $I_u$ and $\text{Pat}_s(q_{c,p})$. 
So, $|C_{i}| = |C_{i-1}| - |\{u\}| - |I_{u}| - |\text{Pat}_s(q_{c,p})| \geq |C_{i-1}| - 1 - 3t - t \geq |C_0| - 5it.$
\qed
\end{proof}
By applying Lemma~\ref{lem:iterprocess} until $|C_i| \leq 2|C_0|/3 $, we obtain a final
set $Q^*$, hence the following:
\begin{corollary}
\label{cor:qset}
For any $s \in S$, $z \in [\sigma]$, there exists a set $Q^*$ containing $z$-unique and stumbling queries of size $\gamma+\delta \geq \min\{\lambda(z),n\}/(15t)$, where $\gamma = |\{ q \in Q^* | q \text{ is stumbling on } s\}|$ and $\delta = |\{ q \in Q^* | q \text{ is $z$-unique on } s\}|$.
\end{corollary}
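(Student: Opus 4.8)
The plan is to obtain $Q^*$ by running the iterative selection procedure of Lemma~\ref{lem:iterprocess} to exhaustion, and then extracting a lower bound on the number of iterations from the stopping condition. Fix $s \in S$ and $z \in [\sigma]$. I set $C_0 = \text{Tgt}_s(Q'_s) \cup \text{Tgt}_s(Q''_s(z))$ and $Q^*_0 = \emptyset$; by the definition of $\lambda$ as a minimum over $S$ this gives $|C_0| \ge \lambda(z)$, and since $C_0 \subseteq [n]$ we also get $\lambda(z) \le n$, so that $\min\{\lambda(z),n\} = \lambda(z)$.

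Next I would apply Lemma~\ref{lem:iterprocess} at steps $i = 1, 2, \dots$, running step $i$ exactly when its hypothesis $|C_{i-1}| > 2|C_0|/3$ holds; each such step yields a conflict-free set $Q^*_i$ with $|Q^*_i| = i$ and a candidate set $C_i$ with $|C_i| \ge |C_0| - 5it$. Since every step deletes at least the chosen target $u$, we have $|C_i| \le |C_0| - i$, so the hypothesis must fail after at most $|C_0|/3$ steps; let $k$ be the last step performed and put $Q^* := Q^*_k$, so that $|Q^*| = k$ and $Q^*$ is conflict free. Each query placed into $Q^*$ is chosen, in the proof of Lemma~\ref{lem:iterprocess}, as a query whose target is some $u \in C_{i-1} \subseteq C_0$, and $C_0$ consists solely of targets of stumbling and $z$-unique queries; picking for each such $u$ the corresponding stumbling or $z$-unique query, every member of $Q^*$ is stumbling or $z$-unique. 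As $Q'_s$ and $Q''_s(z)$ are disjoint, $|Q^*| = \gamma + \delta$ with $\gamma, \delta$ as defined in the statement.

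Finally, because step $k+1$ was not performed we have $|C_k| \le 2|C_0|/3$; combining this with invariant (c) of Lemma~\ref{lem:iterprocess}, i.e.\ $|C_k| \ge |C_0| - 5kt$, yields $|C_0| - 5kt \le 2|C_0|/3$, hence $|C_0|/3 \le 5kt$ and therefore $\gamma + \delta = k \ge |C_0|/(15t) \ge \lambda(z)/(15t) = \min\{\lambda(z),n\}/(15t)$, which is the claimed bound. Since the nontrivial combinatorics is already encapsulated in Lemma~\ref{lem:iterprocess}, the only thing to watch here is the termination bookkeeping: verifying that the lemma's hypothesis truly holds at each step that is executed, that the procedure halts so that a last step $k$ is well defined, and that the two estimates on $|C_k|$ are combined in the direction producing a lower bound on $k$ rather than an upper bound.
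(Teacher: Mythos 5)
Your proof is correct and takes essentially the same approach as the paper, which simply states "by applying Lemma~\ref{lem:iterprocess} until $|C_i| \le 2|C_0|/3$, we obtain a final set $Q^*$" without supplying the details. You correctly fill in the bookkeeping the paper leaves implicit: the monotone decrease $|C_i| \le |C_0| - i$ guarantees termination, the stopping condition $|C_k| \le 2|C_0|/3$ combined with invariant (c) gives $|C_0|/3 \le 5kt$ and hence $k \ge |C_0|/(15t) \ge \lambda(z)/(15t)$, and every query placed into $Q^*$ has its target in $C_0 = \text{Tgt}_s(Q'_s) \cup \text{Tgt}_s(Q''_s(z))$ and is therefore stumbling or $z$-unique (these being disjoint by the definition $Q''_s(z) \subseteq Q \setminus Q'_s$), so $|Q^*| = \gamma + \delta$.
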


\myparagraph{Encoding. }
We are left with the main task of describing the encoder. Ideally, we would like to
encode the targets, each with a cost as stated in Lemma~\ref{lem:zEntropy} and Lemma~\ref{lem:stumblingEntropy},
for the conflict free set $Q^*$ mentioned in Corollary~\ref{cor:qset}. Characters in the remaining positions can be encoded
naively as a string. This approach has a drawback.
While encoding which queries in $Q$ are stumbling has a payoff when compared to Lemma~\ref{lem:stumblingEntropy},
we don't have such a guarantee for $z$-unique queries when compared to Lemma~\ref{lem:zEntropy}.
Without getting into details, according to the choice of the parameters $|Q|$, $z$ and $t$, such encoding sometimes
saves space and sometimes does not: it may use even more space than $H(S)$.
For example, when $|Q| = O(n)$, 
even the naive approach works and yields
an effective lower bound. Instead, if $Q$ is much larger, savings are not guaranteed.
The main point here is that we want to overcome such a dependence on the parameters and always guarantee
a saving, which we obtain by means of an implicit encoding of $z$-unique queries.
Some machinery is necessary to achieve this goal. 

\myparagraph{Archetype and trace. } Instead of trying to directly encode the information of $Q^*$ as discussed above, 
we find a query set $Q^A$ called the \emph{archetype} of $Q^*$, that is indistinguishable
from $Q^*$ in terms of $\gamma$ and $\delta$. The extra property of $Q^A$ is to be
decodable using just $O(n)$ additional bits, hence $E(s)$ is smaller when $Q^A$ is employed.
The other side of the coin is that our solution requires a two-step encoding.
We need to introduce the concept of \emph{trace} of a query $q_{c,p}$
over $s$, denoted by $\text{Trace}_s(q_{c,p})$. Given the access pattern
$\text{Pat}_s(q_{c,p}) = \{ m_1 <  m_2 < \cdots < m_t \}$ (see Section~\ref{sec:technique}), the trace is
defined as the string $\text{Trace}_s(q_{c,p}) = s[m_1] \cdot s[m2] \cdots \cdot s[m_t]$.
We also extend the concept to sets of queries, so that for $\widehat Q \subseteq Q$, we have $\text{Pat}_s(\widehat Q) = \bigcup_{q \in \widehat Q} \text{Pat}_s(q)$, and $\text{Trace}_s(\widehat Q)$ is defined using the sorted positions in $\text{Pat}_s(\widehat Q)$.

Then, we define a \emph{canonical ordering} between query sets. We define the predicate $q_{c,p} \prec q_{d,g}$
iff $p < g$ or $p = g \land c < d$ over queries, so that we can sort queries inside a single query set.
Let $Q_1 = \{ q_1 \prec q_2 \prec \cdots \prec q_x \}$ and let $Q_2 = \{ q'_1 \prec q'_2 \prec \cdots \prec q'_y \}$
be two distinct  query sets. We say that $Q_1 \prec Q_2$ iff either
$q_1 \prec q'_1$ or recursively $(Q_1 \setminus \{q_1\} ) \prec (Q_2 \setminus \{q'_1\} )$.

Given $Q^*$, its archetype $Q^A$ obeys to the following conditions for the given~$s$:
\begin{itemize}
\item it is conflict free and has the same number of queries of $Q^*$;
\item it contains exactly the same stumbling queries of $Q^*$, and all remaining queries are $z$-unique 
(note that they may differ from those in $Q^*$);
\item if $p_1, p_2, \ldots, p_x$ are the positional arguments of queries in $Q^*$, then the same
positions are found in $Q^A$ (while character $c_1, c_2, \ldots, c_x$ may change);
\item $\text{Pat}_s(Q^*) = \text{Pat}_s(Q^A)$;
\item among those query sets complying with the above properties, it is the minimal w.r.t.\mbox{} to the canonical ordering $\prec$.
\end{itemize}
Note that $Q^*$ complies with all the conditions above but the last. Therefore, the archetype of $Q^*$ always exists, being either
a smaller query set (w.r.t.\mbox{} to $\prec$)  or $Q^*$ itself.
The encoder can compute $Q^A$ by exhaustive search, since its time complexity is not relevant to the lower bound.

\myparagraph{First step: encoding for trace and stumbling queries. }
As noted above the stumbling queries for $Q^*$ and $Q^A$ are the same,
and there are $\delta$ of them.
Here, we encode the trace together with the set of stumbling queries. The rationale
is that the decoder must be able to rebuild the original trace only, whilst encoding of the
positions which are not probed is left to the next step, together with $z$-unique queries.
Here is the list of objects to be encoded in order:
\begin{enumerate}
\item[$(a)$] The set of stumbling queries expressed as a subset of $Q$.
\item[$(b)$] The access pattern $\text{Pat}_s(Q^A)$ encoded as a subset of $[n]$, the positions
of~$s$.
\item[$(c)$] The \emph{reduced} trace, obtained from $\text{Trace}_s(Q^A)$ by 
removing all the characters in positions that are targets of stumbling queries. 
Encoding is performed naively by storing each character
using $\log \sigma$ bits. The positions thus removed, relatively to the trace, are stored as a subset of $[|\text{Trace}_s(Q^A)|]$.
\item[$(d)$] For each stumbling query $q_{c,p}$, in the canonical order, an encoded integer $i$ of $\log t$ bits indicating
that the $i$th probe accesses the target of the query. 
\end{enumerate}

The decoder starts with an empty string, it reads the access pattern in $(b)$, the set of removed positions
in $(c)$, and distributes the contents of the reduced trace $(c)$ into the remaining positions.
In order the fill the gaps in $(c)$, it recovers the stumbling queries in $(a)$ and runs each of them, in canonical
ordering. Using the information in $(d)$, as proved by Lemma~\ref{lem:stumblingEntropy}, it can discover the target
in which to place its symbol $c$.
Since $Q^A$ is conflict free, we are guaranteed that each query will always find a symbol in the probed positions.

\begin{lemma}
\label{lem:traceencode}
Let $\ell$ be the length of $\text{Trace}_s(Q^A)$. The first step encodes information $(a)$--$(d)$ using  
at most
$\ell \log \sigma + O(n) + \delta\log(|Q|/\delta) - \delta\log(\sigma/t)$ bits.
\end{lemma}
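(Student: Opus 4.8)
The plan is to count the bits used by each of the four items $(a)$–$(d)$ separately, using the standard fact that a size-$k$ subset of an $N$-element universe is encodable in $\lceil \log \binom{N}{k} \rceil \le k \log(eN/k)$ bits, and then sum. The target bound $\ell \log \sigma + O(n) + \delta\log(|Q|/\delta) - \delta\log(\sigma/t)$ has an obvious provenance: the naive $\ell\log\sigma$ comes from the reduced trace $(c)$ but is charged against the \emph{full} trace length; the $\delta\log(|Q|/\delta)$ is the cost of naming the $\delta$ stumbling queries among all of $Q$ in item $(a)$; the $-\delta\log(\sigma/t)$ is the \emph{saving} obtained because, for each of the $\delta$ stumbling targets, we store only a $\log t$-bit probe index in $(d)$ and omit its $\log\sigma$-bit symbol from the reduced trace in $(c)$; and everything else must collapse into $O(n)$.

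**First I would** handle item $(c)$, which carries the dominant term. The reduced trace has $\ell - \delta$ characters (one character removed per stumbling target, and distinct stumbling queries in $Q^A$ have distinct targets since $Q^A$ is conflict free, so condition $(iii)$ rules out coincidences), each stored in $\log\sigma$ bits, for $(\ell-\delta)\log\sigma = \ell\log\sigma - \delta\log\sigma$ bits; plus the subset of removed positions within $[\ell]$, which costs at most $\lceil\log\binom{\ell}{\delta}\rceil \le \delta\log(e\ell/\delta)$ bits. **Then** item $(d)$ contributes exactly $\delta\log t$ bits ($\log t$ per stumbling query). Adding the $-\delta\log\sigma$ from $(c)$ and $+\delta\log t$ from $(d)$ yields the $-\delta\log(\sigma/t)$ term. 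Item $(a)$, the stumbling set as a subset of $Q$, costs $\lceil\log\binom{|Q|}{\delta}\rceil \le \delta\log(e|Q|/\delta) = \delta\log(|Q|/\delta) + \delta\log e$, giving the $\delta\log(|Q|/\delta)$ term up to an additive $O(\delta) = O(n)$. Item $(b)$, the access pattern $\mathrm{Pat}_s(Q^A) \subseteq [n]$, has at most $n$ elements so costs at most $n$ bits — absorbed into $O(n)$.

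**The remaining work** is to argue that the two leftover pieces, namely $\delta\log(e\ell/\delta)$ from encoding the removed positions in $(c)$ and the $O(\delta)$ slack, are both $O(n)$. Since $\ell \le tn$ trivially (or better, $\ell \le n$ as it is a subset of $[n]$) and $\delta \le n$, we have $\delta\log(e\ell/\delta) \le \delta \log(en/\delta)$, and the function $x\mapsto x\log(en/x)$ is $O(n)$ on $[1,n]$ — this is the same Stirling-type estimate used for $H(S)$ elsewhere in the paper. Likewise $\delta\log e = O(\delta) = O(n)$. Collecting all contributions gives $\ell\log\sigma - \delta\log(\sigma/t) + \delta\log(|Q|/\delta) + O(n)$, which is exactly the claimed bound.

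**The main obstacle** I expect is not any single estimate but bookkeeping the interaction between $(c)$ and $(d)$: one must be careful that the $\delta$ symbols omitted from the reduced trace are precisely the $\delta$ stumbling targets and that each is recoverable by the decoder. This relies on $Q^A$ being conflict free (so no stumbling target lies in another query's access pattern, and the $\log t$-bit index in $(d)$ unambiguously locates it within $\mathrm{Pat}_s(q_{c,p})$, exactly as in Lemma~\ref{lem:stumblingEntropy}) and on the canonical ordering in $(d)$ matching the order in which the decoder replays the queries. Once that correctness point is pinned down, the bit-count is a routine sum of four binomial-coefficient bounds.
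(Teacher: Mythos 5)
Your proposal is correct and follows essentially the same route as the paper's proof: both account item by item, getting $\delta\log(|Q|/\delta)+O(\delta)$ for $(a)$, at most $n$ bits for $(b)$, $(\ell-\delta)\log\sigma$ plus an $O(n)$ term for the removed-position subset in $(c)$, and $\delta\log t$ for $(d)$, then sum and absorb the leftovers (using $\delta,\ell\le n$) into $O(n)$. The only cosmetic difference is that the paper bounds the removed-position subset by the cruder $\ell$ bits rather than your binomial estimate $\delta\log(e\ell/\delta)$, but both are $O(n)$ and the resulting inequality is identical.
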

\begin{proof}
Space occupancy for all objects:~$(a)$ uses $\log{|Q|\choose\delta} = \delta\log(|Q|/\delta) + O(\delta)$;
~$(b)$ uses $\log{n\choose\ell} \leq n$ bits;~$(c)$ uses $(\ell-\delta)\log\sigma$ bits for the reduced trace plus at most $\ell$
bits for the removed positions;~$(d)$ uses $\delta\log t$ bits.\qed
\end{proof}

\myparagraph{Second step: encoding of $z$-unique queries and unprobed positions. }
We now proceed to the second step, where targets for $z$-unique queries are encoded along
with the unprobed positions. They can be rebuilt using queries in $Q^A$. To this end, we assume
that encoding of Lemma~\ref{lem:traceencode} has already been performed and,
during decoding, we assume that the trace has been already rebuilt.
Recall that $\gamma$ is the number of $z$-unique queries.
Here is the list of objects to be encoded:
\begin{enumerate}
\item[$(e)$] The set of queries in $Q^A$ that are $z$-unique, expressed as a subset of $Q^A$ according
to the canonical ordering $\prec$. Also the set of $z$-unique answers $U$ is encoded as a subset of $[n]$.
\item[$(f)$] For each $z$-unique query $q_{c,p}$, in canonical order,
  the encoded integer $p$. This gives a multiset of  $\gamma$ integers in $[n]$.
\item[$(g)$] The \emph{reduced} unprobed region of the string, obtained
by removing all the characters in positions that are targets of $z$-unique queries. 
Encoding is performed naively by storing each character
using $\log \sigma$ bits. The positions thus removed, relatively to the unprobed region, are stored as a subset of $[n-\ell]$.
\item[$(h)$] For each $z$-unique query $q_{c,p}$, in the canonical order, an encoded integer $i$ of $\log z + O(1)$ bits indicating
which position in $[p,p+z]$ contains $c$.
\end{enumerate}

The decoder first obtains $Q^A$ by exhaustive search.
It initializes a set of $|Q^A|$ empty couples $(c,p)$ representing the arguments of each query in canonical order.
It reads $(e)$ and reuses $(a)$ to obtain the parameters of the stumbling queries inside $Q^A$.
It then reads $(f)$ and fills all the positional arguments of the queries.
Then, it starts enumerating all query sets in canonical order that are compatible with the arguments known so far.
That is, it generates characters for the arguments of $z$-unique queries, since the rest is known.
Each query set is then tested in the following way. The decoder executes each query by means of the trace.
If the execution tries a probe outside the access pattern, the decoder skips to the next query set.
If the query conflicts with any other query inside the same query set, the decoder skips.
If the query answer denotes that the query is not $z$-unique (see Section~\ref{sec:technique} and $(e)$), it skips.
In this way, all the requirements for the archetype are met, hence the first
query set that is not skipped is $Q^A$.

Using $Q^A$ the decoder rebuilds the characters in the missing positions of the reduced unprobed region: it starts by reading
positions in $(g)$ and using them to distribute the characters in the reduced region encoded by $(g)$ again.
For each $z$-unique query $q_{c,p} \in Q^A$, in canonical order, the decoder reads the corresponding integer $i$
inside $(h)$ and infers that $s[i+p] = c$.
Again, conflict freedom ensures that all queries can be executed and the process can terminate successfully.
Now, the string $s$ is rebuilt.

\begin{lemma}
\label{lem:queryencode}
The second step encodes information $(e)$--$(h)$ using
at most $(n-\ell) \log \sigma + O(n) - \gamma\log(\sigma/z)$ bits.
\end{lemma}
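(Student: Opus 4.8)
The plan is to account, term by term, for the objects $(e)$--$(h)$ listed just above and show that their combined cost, after the key cancellation coming from Lemma~\ref{lem:zEntropy}, is at most $(n-\ell)\log\sigma + O(n) - \gamma\log(\sigma/z)$. First I would bound $(e)$: the $z$-unique queries form a subset of $Q^A$, which has $\gamma+\delta = O(|Q^A|)$ elements, so naming them costs $\log\binom{|Q^A|}{\gamma}=O(|Q^A|)=O(n)$ bits (recall $|Q^A|\le$ the number of distinct positional arguments $\le n$ by Corollary~\ref{cor:qset} and the hypothesis $\gamma+\delta\le\lambda(z)\le$ essentially $n$); encoding the $z$-unique answer set $U$ as a subset of $[n]$ costs $H(U)=O(n)$ bits by the standing assumption $H(U)=O(n)$ recorded in Section~\ref{sec:technique}. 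Next, $(f)$ is a multiset of $\gamma$ integers in $[n]$, costing $\log\binom{n+\gamma-1}{\gamma}\le\gamma\log(en/\gamma)\le\gamma\log n + O(\gamma)=O(n)$ bits, using $\gamma\le n$. Item $(h)$ stores, for each of the $\gamma$ $z$-unique queries, an index into $[p,p+z]$, i.e.\ $\log z + O(1)$ bits, for a total of $\gamma\log z + O(\gamma)=\gamma\log z+O(n)$ bits.

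The heart of the lemma is item $(g)$, the reduced unprobed region. There are $n-\ell$ unprobed positions; we remove from them the $\gamma$ positions that are targets of $z$-unique queries (these targets are distinct and lie in the unprobed region because $Q^A$ is conflict free, so condition $(iii)$ forbids coincident targets and conditions $(i)$--$(ii)$ together with the fact that a $z$-unique query does not probe its target keep each such target out of every access pattern, hence out of $\text{Pat}_s(Q^A)$). Encoding the remaining $n-\ell-\gamma$ characters naively costs $(n-\ell-\gamma)\log\sigma$ bits, and recording which of the $n-\ell$ unprobed positions were removed costs $\log\binom{n-\ell}{\gamma}\le n-\ell\le n = O(n)$ bits. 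So $(g)$ costs $(n-\ell)\log\sigma - \gamma\log\sigma + O(n)$ bits.

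Summing $(e)$--$(h)$: the $\log\sigma$ terms give $(n-\ell)\log\sigma - \gamma\log\sigma$, the $z$-dependent term from $(h)$ gives $+\gamma\log z$, and everything else is $O(n)$. Thus the total is $(n-\ell)\log\sigma - \gamma\log\sigma + \gamma\log z + O(n) = (n-\ell)\log\sigma - \gamma\log(\sigma/z) + O(n)$, which is exactly the claimed bound. The one place needing care — and what I expect to be the main obstacle — is the correctness side rather than the counting: one must argue that the decoder's exhaustive enumeration in canonical order really recovers $Q^A$ and not some earlier-but-invalid set, so that the $\gamma$ targets decoded from $(h)$ are genuinely the $z$-unique targets. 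This is handled by the archetype's minimality requirement together with the three skip-tests (probe outside $\text{Pat}_s(Q^A)$, internal conflict, answer not in $U$), which between them force the first non-skipped set to satisfy all five archetype conditions; since these facts were already established in the decoding description preceding the lemma, the remaining work in the proof is purely the bit-counting above. I would also double-check the implicit claim $\gamma\le n$ and $|Q^A|\le n$, which follow since the positional arguments of $Q^A$ are distinct (third archetype condition) and lie in a set of size at most $n$; this keeps all the binomial-coefficient bounds inside the $O(n)$ budget.
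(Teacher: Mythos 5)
Your accounting matches the paper's almost line-for-line: the same binomial bounds for $(e)$, $(g)$, $(h)$, the same $O(n)$ absorption of the subsidiary terms and of $U$, and the same cancellation $-\gamma\log\sigma + \gamma\log z = -\gamma\log(\sigma/z)$ giving the final expression. Two small bookkeeping slips are worth fixing, though neither is fatal. In $(f)$, your last step ``$\gamma\log n + O(\gamma) = O(n)$'' is false when $\gamma = \Theta(n)$; the paper's cleaner route is $\log\binom{n+\gamma}{\gamma} \le n+\gamma \le 2n$, and in fact your first inequality already suffices because $\gamma\log(en/\gamma)$ is increasing in $\gamma\le n$ and equals $n\log e = O(n)$ at $\gamma = n$. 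Second, $|Q^A|\le n$ does not follow from ``the positional arguments of $Q^A$ are distinct'' --- the archetype conditions only preserve the \emph{multiset} of $p$-values, and two queries with different characters may share a $p$; the correct reason is conflict freedom, whose condition $(iii)$ forces the targets to be pairwise distinct positions of $s$, so $|Q^A| = \gamma+\delta \le \lambda(z) \le n$ (this also underlies the $O(n)$ bound you use in $(e)$). Your extra paragraph explaining why the $\gamma$ targets lie in the unprobed region and why the decoder's canonical enumeration really recovers $Q^A$ is useful exposition that the paper leaves implicit, but it is not part of the bit count the lemma itself asserts.
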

\begin{proof}
Space occupancy:~$(e)$ uses $\log{|Q^A|\choose{\gamma}} \leq |Q^A|$ bits for the subset plus,
recalling from Section~\ref{sec:technique}, $O(n)$ bits for $U$  ;~$(f)$
uses $\log{n+\gamma\choose \gamma} \leq 2n$ bits;~$(g)$ requires $(n-\ell-\gamma) \log\sigma$ bits
for the reduced unprobed region plus $\log{n-\ell\choose\gamma}$ bits for the positions removed;~$(h)$
uses $\gamma\log z + O(\gamma)$ bits.\qed
\end{proof}

\begin{proof}[of Theorem \ref{thm:generalbound}]
By combining Lemma~\ref{lem:traceencode} and Lemma~\ref{lem:queryencode} we obtain that for each $s \in S$, 
$E(s) \leq n\log\sigma + O(n) + \delta\log\left(\frac{t|Q|}{\delta\sigma}\right) - \gamma\log\left(\frac \sigma z \right).$
We know that $r + \max_{s\in S} E(s) \geq H(S) \geq n\log\sigma - \Theta(n)$, hence the bound follows.\qed
\end{proof}


\section{Upper bounds}
\label{sec:ub}
\noindent Our approach follows substantially the one in \cite{BHM07indexes}, but
uses two new ingredients, that of \emph{monotone} hashing
\cite{BelazzouguiBPV09} and \emph{succinct SB-trees}
\cite{GrossiORR09}, to achieve an improved (and in many cases optimal)
result.  We first consider these problems in a slightly different
framework and give some preliminaries.

\myparagraph{Preliminaries. } We are given a subset $T \subseteq
\left[\sigma\right]$, where $|T| = m$.  Let $R(i) = |\{ j \in T | j < i \}|$ for
any $i \in \left[\sigma\right]$, and $S(i)$ be the $i+1$st element of~$T$, for
any $i \in \left[m\right]$. 

The value of $S(R(p))$ for any $p$ is named the \emph{predecessor} of
$p$ inside $T$.  For any subset $T \subseteq \left[\sigma\right]$, given access
to $S(\cdot)$, a \emph{succinct SB-tree} \cite{GrossiORR09} is a
systematic data structure that supports predecessor queries on $T$,
using $O(|T|\log\log \sigma)$ extra bits. For any $c > 0$ such that
$|T| = O(\log^{c} \sigma)$, the succinct SB-tree supports
predecessor queries in $O(c)$ time plus $O(c)$ calls to
$S(\cdot)$. The data structure relies on a precomputed table of
$n^{1-\Omega(1)}$ bits depending only on $\sigma$,d not on $T$.

A \emph{monotone minimal perfect} hash function for
$T$ is a function $h_T$ such that $h_T(x) = R(x)$ for all $x \in T$,
but $h_T(x)$ can be arbitrary if $x \not \in T$. We need the following
result:
\begin{theorem}[\cite{BelazzouguiBPV09}]
\label{thm:mmphf}
There is a monotone minimal perfect hash function for $T$ that:\\
\indent $\bullet$~occupies $O(m \log \log \sigma)$ bits and can be evaluated in
$O(1)$ time;\\
\indent $\bullet$~occupies $O(m \log^{(3)} \sigma)$ bits and can be evaluated in
$O(\log \log \sigma)$ time.
\end{theorem}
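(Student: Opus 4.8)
Since Theorem~\ref{thm:mmphf} is the result of~\cite{BelazzouguiBPV09}, the plan is to reconstruct its argument. The core idea is a two-level bucketing that reduces monotone minimal perfect hashing on $T$ to a collection of tiny \emph{local} ranking problems plus a single \emph{distributor} problem, combined with a recursion on the universe size that is ultimately responsible for the $\log\log\sigma$ (resp.\ $\log^{(3)}\sigma$) factor.

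First I would set up the bucketing. Sort $T=\{t_0<t_1<\cdots<t_{m-1}\}$ and, for a parameter $b$, cut the sorted sequence into $k=\lceil m/b\rceil$ consecutive buckets, bucket $j$ occupying the \emph{contiguous} universe interval $I_j=[t_{jb},t_{(j+1)b})$. Then $R(x)=jb+R_j(x)$ whenever $x\in T\cap I_j$, where $R_j(x)=|\{y\in T\cap I_j: y<x\}|<b$ is the local rank. The local ranks are cheap to store: for each bucket take a classical (non-monotone) minimal perfect hash function of $T\cap I_j$ into a range of size $O(b)$ — $O(b)$ bits, $O(1)$ evaluation — together with an array of the $b$ local ranks indexed by that hash value, costing $O(b\log b)$ bits. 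Over all $k$ buckets (plus $o(m)$ bits of offsets) this totals $O(m)+O(m\log b)$ bits with $O(1)$ evaluation, \emph{provided} the bucket index $j(x)$ is supplied from outside.

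Supplying $j(x)$ for $x\in T$ is the job of the \emph{distributor}, and I expect this to be the main obstacle. Because the $I_j$ partition $[\sigma]$ into contiguous ranges, $j(x)$ equals the number of bucket boundaries $t_b,t_{2b},\dots$ that are $\le x$, i.e.\ a predecessor/rank query over a reduced key set of size $k$ — but it must be answered for an \emph{arbitrary} $x\in T$, not only for boundary keys, so one cannot naively recurse the very construction one is building, nor can one afford an explicit-key predecessor structure on the $k$ boundaries (that would cost $\Theta((m/b)\log\sigma)$ bits, too much once $b$ is tiny). Following~\cite{BelazzouguiBPV09}, I would make the distributor itself succinct by a recursion that repeatedly shrinks the universe (e.g.\ a square-root step: partition $[\sigma]$ into $\sqrt\sigma$ blocks of length $\sqrt\sigma$, note that ``which block'' is computable in $O(1)$ with \emph{no} stored bits, recurse on the nonempty blocks, now living in a universe of size $\sqrt\sigma$, and carry the global offset of each block separately); the recursion has depth $O(\log\log\sigma)$, and the per-level bookkeeping of offsets and of the reduced distributors is what costs $O(\log\log\sigma)$ (resp.\ $O(\log^{(3)}\sigma)$) bits per element. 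The two bullets then follow by balancing $b$ against how the distributor is realized: a distributor answering in $O(1)$ time at a slightly larger per-element cost yields the first bullet ($O(m\log\log\sigma)$ bits, $O(1)$ time), whereas choosing $b=\Theta(\log\log\sigma)$ — so the local tables cost only $O(\log b)=O(\log^{(3)}\sigma)$ bits per element — together with a slower $O(\log\log\sigma)$-time recursive distributor yields the second bullet. Verifying that the offsets and distributor descriptions telescope to the claimed bit bounds, and that local hash, distributor, and all recursion levels together stay within the claimed time, is the routine but delicate part of the argument.
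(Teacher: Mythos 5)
Theorem~\ref{thm:mmphf} is imported verbatim from~\cite{BelazzouguiBPV09}; the present paper gives no proof and uses it strictly as a black box, so there is nothing in the paper to compare your sketch against. Measured against the cited source, your outer structure is right (sort $T$, cut it into buckets of size $b$, store a non-monotone MPHF plus a local-rank table per bucket, and reduce the hard part to a ``distributor'' that identifies the bucket of any $x \in T$), but the distributor you propose is not theirs and does not obviously deliver either bullet. Their distributor rests on longest common prefixes rather than on van Emde Boas-style universe subdivision: because buckets are consecutive ranges of the sorted keys, one shows that distinct buckets have distinct LCPs, so the bucket of $x \in T$ is determined by the pair (LCP length of $x$'s bucket, $x$ itself). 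A single non-monotone MPHF over $T$ stores, per key, the LCP length (an integer in $[\log\sigma]$, hence $O(\log\log\sigma)$ bits) together with the in-bucket rank; a second MPHF over bucket-LCP values then returns the bucket number. This gives $O(1)$ time with no recursion at all. The $O(m\log^{(3)}\sigma)$-bit variant comes from compressing the per-key LCP-length information --- these are themselves small integers --- by one further application of the same idea, paying $O(\log\log\sigma)$ query time.

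Your square-root recursion, by contrast, inherently has depth $\Theta(\log\log\sigma)$, so it naturally costs $\Theta(\log\log\sigma)$ query time and needs an additional, unstated mechanism to ``jump'' directly to the correct level if it is to match the $O(1)$-time first bullet. It also cannot by itself produce a $\log^{(3)}\sigma$ bit bound: the recursion depth is $\log\log\sigma$, not $\log\log\log\sigma$, and nothing in the sketch shaves the extra $\log$. The missing ingredient on both counts is precisely the LCP fingerprint of a bucket, which lets the bucket index be recovered in one MPHF probe from a quantity that fits in $O(\log\log\sigma)$ bits.
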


Although function $R(\cdot)$ has been
studied extensively in the case that $T$ is given explicitly, we
consider  the situation where $T$ can only be accessed
through (expensive) calls to $S(\cdot)$. We also wish to
minimize the space used (so e.g. creating an explicit copy of
$T$ in a preprocessing stage, and then applying existing solutions,
is ruled out).  We give the following extension of known results:

\begin{lemma}
\label{lemma:rankchar}
Let $T \subseteq \left[\sigma\right]$ and $|T| = m$. Then, for any
$1 \le k \le \log \log \sigma$, there is a data structure 
that supports $R(\cdot)$ in $O(\log \log \sigma)$ time 
plus $O(1 + \log k)$ calls to $S(\cdot)$, and 
uses $O((m/k) \log \log \sigma)$ bits of space.  The
data structure uses a pre-computed table (independent
of $T$) of size $\sigma^{1-\Omega(1)}$ bits.
\end{lemma}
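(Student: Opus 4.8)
The idea is to bucket $T$ and combine a monotone minimal perfect hash function (Theorem~\ref{thm:mmphf}) for navigating between buckets with a succinct SB-tree (the preliminaries above) for searching inside a bucket, calibrating the bucket size so that both the space bound $O((m/k)\log\log\sigma)$ and the probe bound $O(1+\log k)$ come out right. Concretely, I would partition the elements of $T$, in sorted order, into $\lceil m/b\rceil$ consecutive buckets of size $b$ for a parameter $b$ to be fixed (I expect $b=\Theta(k)$ or $b=\Theta(k\log^c\sigma)$ for a suitable constant, depending on how one charges the SB-tree calls). Let $T' \subseteq T$ be the set of bucket minima; since $|T'| = O(m/b)$, a monotone minimal perfect hash function $h_{T'}$ for $T'$ occupies $O((m/b)\log\log\sigma)$ bits and is evaluated in $O(1)$ time by the first bullet of Theorem~\ref{thm:mmphf}. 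To compute $R(p)$, I first need to locate the bucket containing the predecessor of $p$; the standard trick is that $h_{T'}$ together with one call to $S(\cdot)$ (to fetch a candidate bucket minimum and compare it against $p$) suffices to identify the correct bucket index, from which the number of elements of $T$ before that bucket is known exactly (it is a multiple of $b$, or stored implicitly by the bucket index). This contributes $O(1)$ time and $O(1)$ calls to $S(\cdot)$.

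Next, inside the located bucket — a set of $b$ elements of $[\sigma]$ accessible only through $S(\cdot)$ — I build a succinct SB-tree. By the preliminaries, if $b = O(\log^c\sigma)$ for a constant $c$, this structure uses $O(b\log\log\sigma)$ bits, answers a predecessor query in $O(c) = O(1)$ time plus $O(c)$ calls to $S(\cdot)$, and shares a single precomputed table of $\sigma^{1-\Omega(1)}$ bits across all buckets. Summed over all $O(m/b)$ buckets, the SB-trees use $O(m\log\log\sigma)$ bits — which is too much unless $b = \Omega(k)$. This is exactly where the parameter $k$ enters: I would instead build SB-trees only on a sub-sampled structure, or more cleanly, use a two-level bucketing where the top level has buckets of size $\Theta(k)$ contributing the $O((m/k)\log\log\sigma)$ term, and within a top bucket I recurse one more level with sub-buckets small enough that the SB-tree table handles them. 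The $O(1+\log k)$ calls to $S(\cdot)$ then arise from doing a (roughly) binary search over the $\Theta(k)$ elements of a top bucket, each comparison costing one call to $S(\cdot)$; the $O(\log\log\sigma)$ time (as opposed to probe) budget absorbs the monotone-hash evaluation cost and any SB-tree table lookups.

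The main obstacle, and the step requiring the most care, is reconciling the three budgets simultaneously: the space must be $O((m/k)\log\log\sigma)$, so we cannot afford an SB-tree (costing $\Theta(\log\log\sigma)$ bits per element) on all of $T$ — only on an $O(m/k)$-sized skeleton; yet the search within a bucket of $\Theta(k)$ raw elements must cost only $O(1+\log k)$ calls to $S(\cdot)$, which forces an (almost) binary search rather than a linear scan, and binary search over $S(\cdot)$-values is exactly what gives the $\log k$ term. I would therefore organize the argument as: (1) top-level monotone-hash bucketing into $O(m/k)$ buckets of size $O(k)$, with the bucket-boundary skeleton stored via $h_{T'}$ and an SB-tree, giving $O((m/k)\log\log\sigma)$ bits, $O(1)$ time and $O(1)$ probes to reach the bucket; (2) inside the bucket, a binary search directly on the $S(\cdot)$ values, using no stored data at all, costing $O(\log k)$ calls to $S(\cdot)$ and $O(\log k) \le O(\log\log\sigma)$ time; (3) add the counts from (1) and (2) to obtain $R(p)$. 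Checking that the skeleton can indeed be searched in $O(\log\log\sigma)$ time with $O(1)$ calls to $S(\cdot)$ — which needs the $m/k = \sigma^{O(1)}$-sized SB-tree table and the constant-time monotone hash — is the one place where the $\log^{c}\sigma$ hypothesis of the SB-tree and the $k \le \log\log\sigma$ restriction are both genuinely used, so I would make sure those constants line up before declaring victory.
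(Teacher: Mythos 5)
The proposal has the right general shape (a coarse locator, then an SB-tree, then a binary search of width $k$), but two of the three pieces are chosen wrongly and would not work as described.

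First, a monotone minimal perfect hash function cannot be used to locate the bucket of an arbitrary query point $p$. Theorem~\ref{thm:mmphf} only guarantees $h_{T'}(x) = R(x)$ for $x \in T'$; for $p \notin T'$ (the typical case in a predecessor query) the value $h_{T'}(p)$ is arbitrary, so fetching ``a candidate bucket minimum and comparing against $p$'' tells you at best that the candidate is wrong, not where the correct bucket is. This is exactly why the paper uses an MMPH in Theorem~\ref{thm:ub} for $\select$ (there the argument is always a member of the set, because one has just probed $s$ to read the character) but \emph{not} in Lemma~\ref{lemma:rankchar}, where genuine predecessor search is required. The paper instead stores every $(\log\sigma)$th element of $T$ explicitly in a y-fast trie, which answers predecessor in $O(\log\log\sigma)$ time with no calls to $S(\cdot)$ and uses only $O(m)$ bits.

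Second, your plan to put an SB-tree on an $O(m/k)$-element skeleton does not respect the SB-tree's own size restriction: the $O(c)$-time guarantee needs $|T'| = O(\log^c\sigma)$ for a fixed constant $c$, whereas $m/k$ can be as large as $\sigma/k$, i.e.\ polynomial in $\sigma$. You flag this tension at the end but do not resolve it. The paper resolves it by interposing the y-fast trie: it first narrows to a bucket of $\log\sigma$ consecutive elements of $T$, and only then uses an SB-tree \emph{per bucket} on the $(\log\sigma)/k$ sub-sampled elements of that bucket, which is comfortably within the $O(\log^c\sigma)$ regime. This SB-tree lands you within $k$ consecutive keys of $T$ using $O(1)$ calls to $S(\cdot)$; the final binary search over those $k$ keys contributes the $O(\log k)$ calls, exactly as you anticipated. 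The total space is $O(m) + O((m/k)\log\log\sigma) = O((m/k)\log\log\sigma)$ since $k \le \log\log\sigma$, and the total time is dominated by the $O(\log\log\sigma)$ y-fast trie query. So the missing idea is the y-fast trie as the coarse predecessor structure, and the bucket size should be $\log\sigma$ (decoupled from $k$), not $\Theta(k)$.
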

\begin{proof}
  We construct the data structure as follows.  We store every ($\log
  \sigma$)th element of $T$ in a y-fast trie \cite{Willard83}.  This
  divides $T$ into \emph{buckets} of $\log \sigma$ consecutive
  elements.  For any bucket $B$, we store every $k$th element of $T$
  in a succinct SB-tree.
  The space usage of the
  y-fast trie is $O(m)$ bits, and that of the succinct SB-tree is
  $O((m/k) \log \log \sigma)$ bits.

  To support $R(\cdot)$, we first perform a query on the y-fast trie,
  which takes $O(\log \log \sigma)$ time.  We then perform a query in
  the appropriate bucket, which takes $O(1)$ time by looking up a
  pre-computed table (which is independent of $T$) of size
  $\sigma^{1-\Omega(1)}$.  The query in the bucket also requires
  $O(1)$ calls to $S(\cdot)$.  We have so far computed the answer
  within $k$ keys in $T$: to complete the query for $R(\cdot)$ we
  perform binary search on these $k$ keys using $O(\log k)$ calls to
  $S(\cdot)$.
\end{proof}
\paragraph*{Supporting $\rank$ and $\select$.}
In what follows, we use Lemma~\ref{lemma:rankchar} choosing $k = 1$ and $k = \log \log \sigma$.
We now show the following result, contributing to
eq.~\eqref{eq:redundancy}. Note that the first option in
Theorem~\ref{thm:ub} has optimal index size for $t$ probes, for $t \le
\log \sigma / \log \log \sigma$.  The second option has optimal index
size for $t$ probes, for $t \le \log \sigma / \log^{(3)} \sigma$, but
only for $\select$.
\begin{theorem}
\label{thm:ub}
For any $1 \le t \le \sigma$, there exist
data structures with the following complexities:
\begin{itemize}
\vspace*{1ex}
\item[(a)] $\select$ in $O(t)$ probes and $O(t)$ time,
and $\rank$ in $O(t)$ probes and $O(t + \log \log \sigma)$
time  using a succinct index with $r=O(n (\log \log \sigma + (\log \sigma)/t))$
bits of redundancy.  If $\sigma = (\log n)^{O(1)}$, the $\rank$ operation requires
only $O(t)$ time.


\item[(b)] $\select$ in $O(t)$ probes and $O(t \log \log \sigma)$ time,
and $\rank$ in $O(t \log^{(3)} \sigma)$ probes and 
$O(t \log \log \sigma \log^{(3)} \sigma)$ time,  using 
$r=O(n (\log^{(3)} \sigma + (\log \sigma)/t))$ bits of redundancy for the succinct index. 
\end{itemize}
\end{theorem}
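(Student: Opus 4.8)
The plan is to follow the two-level chunk decomposition underlying \cite{BHM07indexes}, substituting the monotone minimal perfect hash functions of Theorem~\ref{thm:mmphf} and the ``rank on a select-accessed set'' primitive of Lemma~\ref{lemma:rankchar} at the two places where \cite{BHM07indexes} loses a logarithmic factor. Assume w.l.o.g.\ that $\sigma$ divides $n$ and cut $s$ into $n/\sigma$ \emph{chunks} of $\sigma$ consecutive symbols; both $\rank(c,p)$ and $\select(c,j)$ split into an \emph{inter-chunk} part --- which chunk contains the $j$th occurrence of $c$, resp.\ how many $c$'s lie in the chunks before the one holding $p$ --- and an \emph{intra-chunk} part. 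The inter-chunk part is a searchable-partial-sums instance over $n/\sigma$ chunks and $\sigma$ symbols with at most $n$ nonzero incidences whose values sum to $n$; I would store, for each symbol $c$, an Elias--Fano bitvector over the chunks containing $c$ together with the unary-gap encoding of the per-chunk counts, for $O(n)$ bits in total, supporting the navigation (including ``next chunk containing $c$'') in $O(1)$ time and no probes. Everything then reduces to intra-chunk $\rank$ and $\select$ within a single chunk, whose symbol alphabet and position range are both $[\sigma]$.

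Fix a chunk; let $A_c\subseteq[\sigma]$ be the positions holding symbol $c$, $T\subseteq[\sigma]$ the occurring symbols, and $V$ the $O(\sigma)$-bit sorted histogram of the chunk. The first object is intra-chunk $\select$, i.e.\ a \emph{select}-access $S(\cdot)$ to each $A_c$: I would sample the position of every $t$th occurrence of every symbol, written relative to its chunk and hence $\log\sigma$ bits each, for $O((n/t)\log\sigma)$ bits overall --- precisely the $\frac{n\log\sigma}{t}$ term of~\eqref{eq:redundancy}. To answer $S(j)$ one maps $c$ through a \emph{monotone minimal perfect hash function on $T$} to the block of samples of $c$, jumps to the $\lfloor(j-1)/t\rfloor t$-th occurrence, and resolves the remaining at most $t$ occurrences with $O(t)$ further probes using $V$ and the inter-chunk structure --- this last step is the argument of \cite{BHM07indexes}, which I would reuse verbatim. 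With the first option of Theorem~\ref{thm:mmphf} the hash functions cost $O(|T|\log\log\sigma)$ bits ($O(n\log\log\sigma)$ over all chunks) and $O(1)$-time evaluations, giving $S(\cdot)$ in $O(t)$ probes and $O(t)$ time; with the second option they cost $O(|T|\log^{(3)}\sigma)$ bits ($O(n\log^{(3)}\sigma)$ overall) and $O(\log\log\sigma)$-time evaluations, giving $S(\cdot)$ in $O(t)$ probes and $O(t\log\log\sigma)$ time.

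The second object, the genuinely new ingredient, is intra-chunk $\rank$: with the $S(\cdot)$-access to $A_c$ in hand, I apply Lemma~\ref{lemma:rankchar} to $A_c$ with its parameter $k$ set to $1$ for option~(a) and to $\log\log\sigma$ for option~(b). For $k=1$ this is $O(\log\log\sigma)$ time and $O(1)$ calls to $S(\cdot)$ --- hence $\rank$ in $O(t)$ probes and $O(t+\log\log\sigma)$ time --- using $O(|A_c|\log\log\sigma)$ bits, i.e.\ $O(n\log\log\sigma)$ over all chunks; for $k=\log\log\sigma$ it is $O(\log^{(3)}\sigma)$ calls to $S(\cdot)$ --- hence $\rank$ in $O(t\log^{(3)}\sigma)$ probes and $O(t\log\log\sigma\log^{(3)}\sigma)$ time --- using only $O(|A_c|)$ bits. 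Summing the inter-chunk structure, the sampled positions, the histograms, the hash functions, and the Lemma~\ref{lemma:rankchar} structures gives redundancy $O(n(\log\log\sigma+(\log\sigma)/t))$ for~(a) and $O(n(\log^{(3)}\sigma+(\log\sigma)/t))$ for~(b). The $\sigma=(\log n)^{O(1)}$ remark in~(a) follows because then the only source of the additive $\log\log\sigma$ in the $\rank$ time is the y-fast trie inside Lemma~\ref{lemma:rankchar}, which over the universe $[\sigma]$ can be replaced by an $O(1)$-time predecessor lookup through a table of size $\sigma^{O(1)}=n^{o(1)}$ bits.

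The step I expect to be the main obstacle is the resolution phase of intra-chunk $\select$: a symbol may occur only a handful of times in a chunk and still be spread over all $\sigma$ positions, so one can neither scan the chunk nor store the first occurrence of every symbol in every chunk (that is already $\Theta(n\log\sigma)$ bits). Showing that $O(t)$ probes suffice --- by sampling occurrences across the whole string, not per chunk, and by combining the inter-chunk structure with $V$ to skip empty stretches --- is the delicate part, and is in essence the construction of \cite{BHM07indexes}; the novelty here is orthogonal, namely replacing their symbol-indexing by Theorem~\ref{thm:mmphf} and their intra-chunk $\rank$ by Lemma~\ref{lemma:rankchar}. A secondary point to verify is that every precomputed table used (inside Lemma~\ref{lemma:rankchar} and for operating on chunk pieces of polylogarithmic effective size) depends only on $\sigma$ and occupies $n^{1-\Omega(1)}$ bits, hence does not affect the redundancy.
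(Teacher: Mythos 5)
Your plan coincides with the paper's at every structural level: chunk $s$ into blocks of $\sigma$ symbols, reduce the inter-chunk part to $O(n)$ extra bits and $O(1)$ overhead as in \cite{BHM07indexes,GMR06large}, store the per-block histogram bitvector $Z$, evaluate the ``rank of $i$ inside $T_c$'' forward map via the monotone minimal perfect hash of Theorem~\ref{thm:mmphf}, and apply Lemma~\ref{lemma:rankchar} to each $T_c$ with $k=1$ for part~(a) and $k=\log\log\sigma$ for part~(b). Your space and time accounting for $\rank$ and for the redundancy matches the paper's, and you correctly isolate the two genuinely new ingredients over \cite{BHM07indexes} as Theorem~\ref{thm:mmphf} and Lemma~\ref{lemma:rankchar}.

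The one place your write-up does not go through as stated is intra-chunk $\select$. You propose to record the position of every $t$-th occurrence of each symbol ($O((n/t)\log\sigma)$ bits) and then ``resolve the remaining at most $t$ occurrences with $O(t)$ further probes,'' but from the sampled $kt$-th occurrence of $c$ there is no constant-probe way to advance to the $(kt+1)$-st occurrence --- that is exactly the primitive under construction, and neither the histogram nor the inter-chunk structure helps, since the $t$ desired occurrences may be interleaved with up to $\sigma$ other positions. What the paper (and \cite{BHM07indexes}) actually does is different: it introduces the stable-sort permutation $\pi$ of the block, reduces $\select(c,\cdot)$ to $\pi^{-1}(j)$ for a rank $j$ computable from $Z$, and invokes \cite{MRRR03}: adding $O(\sigma + (\sigma\log\sigma)/t)$ bits of shortcut pointers along the cycles of $\pi$ lets one compute $\pi^{-1}(j)$ with $O(t)$ \emph{forward} applications of $\pi$, each costing a single probe of $B$ plus one MMPHF evaluation. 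The space you budget is of the same order, so the redundancy bound is unharmed, but the algorithm is cycle-following inversion, not occurrence sampling. You do flag this resolution phase as the expected obstacle and defer to \cite{BHM07indexes}, so this is a gap in the exposition rather than a wrong architectural choice; nonetheless, as written, the $\select$ construction would not work, and the $\pi$/$\pi^{-1}$ machinery is the missing piece.
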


\begin{proof}
We divide the given string $s$ into contiguous blocks of size $\sigma$ (assume 
for simplicity that $\sigma$ divides $n = |s|$). 
As in \cite{BHM07indexes,GMR06large},  we use
$O(n)$ bits of space, and incur an additive $O(1)$-time
slowdown, to reduce the problem of supporting 
$\rank$ and $\select$ on $s$ to the problem of supporting 
these operations on a given block $B$.  We denote the
individual characters of $B$ by 
$B\left[0\right],\ldots,B\left[\sigma -1\right]$.

Our next step is also as in \cite{BHM07indexes}: letting $n_c$ denote
the multiplicity of character $c$ in $B$, we store the bitstring $Z =
1^{n_0}01^{n_1}0\ldots 1^{n_{\sigma-1}}0$, which is of length
$2\sigma$, and augment it with the binary $\rank$ and $\select$
operations, using $O(\sigma)$ bits in all.  Let $c = B\left[i\right]$ for some $0
\le i \le \sigma - 1$, and let $\pi\left[i\right]$ be the position of $c$ in a
stably sorted ordering of the characters of $B$ ($\pi$ is a
permutation).  As in~\cite{BHM07indexes}, $\select(c,\cdot)$ is
reduced, via $Z$, to determining $\pi^{-1}(j)$ for some
$j$. 
As shown in \cite{MRRR03}, for any $1 \le t \le
\sigma$, permutation $\pi$ can be augmented with $O(\sigma + (\sigma
\log \sigma) / t)$ bits so that $\pi^{-1}(j)$ can be computed in
$O(t)$ time plus $t$ evaluations of $\pi(\cdot)$ for various
arguments.

If $T_c$ denotes the set of indexes in $B$ containing the
character $c$, we store a minimal monotone hash function
$h_{T_c}$ on $T_c$, for all $c \in \left[\sigma\right]$. To compute 
$\pi(i)$, we probe $s$ to find $c = B\left[i\right]$, and observe 
that $\pi(i) = R(i) + \sum_{i=0}^{c-1} n_i$.
The latter term is obtained in $O(1)$ time by $\rank$ 
and $\select$ operations on $Z$, and the former
term by evaluating $h_{T_c}(i)$.  
By Theorem~\ref{thm:mmphf}, the complexity
of $\select(c,i)$ is as claimed.

As noted above, supporting $\rank(c,i)$ on $s$ reduces
to supporting $\rank$ on an individual block~$B$.  
If $T_c$ is as above, we apply Lemma~\ref{lemma:rankchar} 
to each $T_c$, once with $k = 1$ and 
once with $k = \log \log \sigma$.  Lemma~\ref{lemma:rankchar}
requires some calls to $S(\cdot)$, but this is just
$\select(c,\cdot)$ restricted to $B$, and is solved as 
described above.
If $\sigma = (\log n)^{O(1)}$, then $|T_c| = (\log n)^{O(1)}$,
and we store $T_c$ itself in the succinct SB-tree,
which allows us to compute $R(\cdot)$ in $O(1)$ time using a
(global, shared) lookup table of size $n^{1-\Omega(1)}$ bits.
\qed
\end{proof}
The enhancements described here also lead to more efficient
non-systematic data structures. Namely, for $\sigma = \Theta(n^\varepsilon)$ , $0 < \varepsilon < 1$, 
we match the lower bound of ~\cite[Theorem 4.3]{golynski09}. Moreover, we improve asymptotically
both in terms of space and time over the results of~\cite{BHM07indexes}:
\begin{corollary}
\label{cor:non-systematic}
There exists a data structure that represents any string $s$ of length $n$ using $nH_k(s) + O(\frac{n\log \sigma}{\log \log \sigma})$ bits, for any $k = o(\frac{\log_\sigma n}{\log \log \sigma})$, supporting $\access$ in $O(1)$ time, 
$\rank$ and $\select$ in $O(\log \log \sigma)$ time.
\end{corollary}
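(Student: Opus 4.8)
The plan is to build the non-systematic structure by combining Theorem~\ref{thm:ub} with a known entropy-bounded representation of $s$ that also supports $O(1)$-time $\access$. Concretely, I would invoke an $k$th-order compressed encoding of $s$ that uses $nH_k(s) + O(n)$ bits (or $nH_k(s) + o(n\log\sigma)$ bits, whichever clean statement is available, e.g.\ the compression booster of Ferragina--Manzini or the approach of Gonz\'alez--Navarro) with constant-time random access to any character $s[i]$. Provided $k = o(\log_\sigma n / \log\log\sigma)$, the redundancy of that base encoding is $o(n\log\sigma/\log\log\sigma)$, so the dominant extra cost will come entirely from the succinct index built on top of it.

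The second step is to layer the succinct index of Theorem~\ref{thm:ub}(a) on top of this base encoding, taking $t = \log\log\sigma$. With that choice of $t$, option (a) gives a succinct index of redundancy $r = O(n(\log\log\sigma + (\log\sigma)/\log\log\sigma)) = O(n\log\sigma/\log\log\sigma)$ bits, and supports $\select$ in $O(t) = O(\log\log\sigma)$ probes and time, and $\rank$ in $O(t + \log\log\sigma) = O(\log\log\sigma)$ probes and time. Crucially, every ``probe'' of the succinct index is a call to $\access$, which the base encoding answers in $O(1)$ time; hence the $O(\log\log\sigma)$ probe bounds translate directly into $O(\log\log\sigma)$ time bounds for $\rank$ and $\select$ on the combined non-systematic structure, while $\access$ itself remains $O(1)$. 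Summing the two space contributions yields the claimed $nH_k(s) + O(n\log\sigma/\log\log\sigma)$ bits. One should also mention that the precomputed tables used by Lemma~\ref{lemma:rankchar} (and hence Theorem~\ref{thm:ub}) have size $\sigma^{1-\Omega(1)} = n^{1-\Omega(1)}$ and are shared globally, so they do not affect the leading-order space bound.

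The main obstacle — really the only subtle point — is ensuring that a base encoding with \emph{both} the right entropy bound \emph{and} $O(1)$-time $\access$ actually exists for the full range $k = o(\log_\sigma n/\log\log\sigma)$; this is where one must cite the appropriate prior work carefully, since some $H_k$-compressed representations only achieve constant-time access when $k$ is quite small (e.g.\ $k = O(\log_\sigma n)$ with a suitable constant), and one needs the block-decomposition argument to go through with $o(n)$ (or at worst $o(n\log\sigma/\log\log\sigma)$) total overhead from block boundaries, escape codes, and dictionary storage. Once that base representation is fixed, the remainder is a routine composition: the succinct index treats $s$ as a read-only string accessed via $\access$, exactly the model of Theorem~\ref{thm:ub}, so no part of that construction needs to be reopened. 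I would therefore devote most of the write-up to pinning down the base encoding and its parameters, and dispatch the layering in a sentence or two.
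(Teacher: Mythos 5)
Your proposal matches the paper's proof essentially exactly: take Theorem~\ref{thm:ub}(a) with $t=\log\log\sigma$, layer it over an entropy-compressed $O(1)$-access encoding of $s$ (the paper cites~\cite{FerraginaV07,GonzalezN06,SadakaneG06}), and observe that each probe becomes an $O(1)$-time $\access$. The ``subtle point'' you flag about the base encoding is resolved by the concrete redundancy bound those works give, namely $a = O\bigl(\frac{n}{\log_\sigma n}(k\log\sigma + \log\log n)\bigr)$, which is $O(n\log\sigma/\log\log\sigma)$ for $k = o(\log_\sigma n/\log\log\sigma)$, so no further argument is needed.
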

\begin{proof}
  We take the data structure of Theorem~\ref{thm:ub}(a), where
  $r=O(\frac{n \log \sigma}{\log \log \sigma})$. We compress $s$ using
  the high-order entropy encoder
  of~\cite{FerraginaV07,GonzalezN06,SadakaneG06} resulting in an
  occupancy of $nH_k(s) + a$ bits, where $H_k(s)$ is the $k$th-order
  empirical entropy and $a$ is the extra space introduced by encoding.
  We have $a = O(\frac {n}{\log_\sigma n}(k\log\sigma + \log \log
  n))$, which is $O(\frac{n \log \sigma}{\log \log \sigma})$ for our
  choice of $k$, hence it doesn't dominate on the data structure
  redundancy.  Operation $\access$ is immediately provided in $O(1)$
  time by the encoded structure, thus the time complexity of
  Theorem~\ref{thm:ub} applies.\qed
\end{proof}

\small

\begin{thebibliography}{10}

\bibitem{BHM07indexes}
J. Barbay, M. He, J.~I. Munro, and S.~S. Rao.
\newblock Succinct indexes for strings, binary relations and multi-labeled
  trees.
\newblock {\em SODA 2007}, 680--689, 2007. Also, full version
available in Internet.

\bibitem{BF02}
P.~Beame and F.E. Fich.
\newblock Optimal bounds for the predecessor problem and related problems.
\newblock {\em J. Comput. Syst. Sci}, 65:38--72, 2002.

\bibitem{BelazzouguiBPV09}
D. Belazzougui, P. Boldi, R. Pagh, and S. Vigna.
\newblock Monotone minimal perfect hashing: searching a sorted table with {{\it
  O}(1)} accesses.
\newblock {\em SODA 2009}, 785--794, 2009.

\bibitem{CoverThomas}
T. M. Cover and J. A. Thomas.
\newblock {\em Elements of Information Theory} (Wiley Series in
Telecommunications and Signal Processing), Wiley-Interscience 2006.

\bibitem{DLO03linear}
E.~D. Demaine and A. L\'{o}pez-Ortiz.
\newblock A linear lower bound on index size for text retrieval.
\newblock {\em J. Algorithms}, 48(1):2--15, 2003.

\bibitem{FerraginaV07}
P. Ferragina and R. Venturini.
\newblock A simple storage scheme for strings achieving entropy bounds.
\newblock {\em Theor. Comput. Sci.}, 372(1):115--121, 2007.

\bibitem{GM07}
A.~G\'{a}l and P.~Bro Miltersen.
\newblock The cell probe complexity of succinct data structures.
\newblock {\em Theor. Comput. Sci.}, 379:405--417, 2007.

\bibitem{golynski07}
A.~Golynski.
\newblock Optimal lower bounds for rank and select indexes.
\newblock {\em TCS}, 387:348--359, 2007.

\bibitem{golynski-thesis}
A.~Golynski.
\newblock {\em Upper and Lower Bounds for Text Indexing Data Structures}.
\newblock PhD Th., U. Waterloo 2007.

\bibitem{golynski09}
A.~Golynski.
\newblock Cell probe lower bounds for succinct data structures.
\newblock  {\em SODA 2009},  625--632, 2009.

\bibitem{GGGRR07}
A.~Golynski, R.~Grossi, A.~Gupta, R.~Raman, and S.~S. Rao.
\newblock On the size of succinct indices.
\newblock In {\em Proc 15th ESA, LNCS 4698}, pages 371--382, 2007.

\bibitem{GMR06large}
A. Golynski, J.~I. Munro, and S.~S. Rao.
\newblock Rank/select operations on large alphabets: a tool for text indexing.
\newblock In {\em Proc. of the 17th ACM-SIAM SODA}, pages 368--373, 2006.

\bibitem{GonzalezN06}
R. Gonz{\'a}lez and G. Navarro.
\newblock Statistical encoding of succinct data structures.
\newblock {\em CPM 2006}, 294--305, 2006.

\bibitem{GrossiORR09}
R. Grossi, A. Orlandi, R. Raman, and S.~S. Rao.
\newblock More haste, less waste: Lowering the redundancy in fully indexable
  dictionaries.
\newblock {\em STACS 2009}, 517--528, 2009.

\bibitem{M05}
P.~B. Miltersen.
\newblock Lower bounds on the size of selection and rank indexes.
\newblock {\em SODA 2005}, 11--12, 2005.

\bibitem{MRRR03}
J~I. Munro, R.~Raman, V.~Raman, and S.~S. Rao.
\newblock Succinct representations of permutations.
\newblock In {\em Proc. 30th ICALP, LNCS 2719}, pages 345--356, 2003.

\bibitem{P08}
M.~P\v{a}tra\c{s}cu.
\newblock Succincter.
\newblock In {\em Proc. 49th IEEE FOCS}, pages 305--313, 2008.

\bibitem{PT06}
M.~P\v{a}tra\c{s}cu and M.~Thorup.
\newblock Time-space trade-offs for predecessor search.
\newblock  {\em STOC 2006}, 232--240, 2006.

\bibitem{RRR07}
R.~Raman, V.~Raman, and S.~S. Rao.
\newblock Succinct indexable dictionaries, with applications to representing
  $k$-ary trees, prefix sums and multisets.
\newblock {\em ACM Transactions on Algorithms}, 4, 2007.

\bibitem{SadakaneG06}
K. Sadakane and R. Grossi.
\newblock Squeezing succinct data structures into entropy bounds.
\newblock In {\em Proc. of the 17th ACM-SIAM SODA}, pages 1230--1239, 2006.

\bibitem{MST::BoasKZ1977}
P.~van Emde~Boas.
\newblock Design and implementation of an efficient priority queue.
\newblock {\em Math. Syst. Theory}, 10:99--127, 1977.

\bibitem{Willard83}
Dan~E. Willard.
\newblock Log-logarithmic worst-case range queries are possible in space
  {Theta(N)}.
\newblock {\em IPL}, 17(2):81--84, 1983.

\end{thebibliography}

\appendix
\normalsize
\newpage
\appendix
\section{Appendix}
\subsection{Extending previous work}
\label{apdx:extending}
In this section, we prove a first lower bound for $\rank$ and
$\select$ operations. We extend the existing techniques
of~\cite{golynski07}, originally targeted at $\sigma = 2$.  The bound
has the advantage to hold for any $1 \leq t \leq n/2$, but it is
weaker than eq.~(\ref{eq:redundancy}) when $\log t = o(\log \sigma)$.
\begin{theorem}
  \label{thm:golynskiext}
  Let $s$ be an arbitrary string of length $n$ over the alphabet
  $\Sigma=\left[\sigma\right]$, where $\sigma \leq n$. Any algorithm solving
  $\rank$ or $\select$ queries on $s$ using at most $t$ character probes
  (i.e.\mbox{} $\access$ queries), where $1 \leq t \leq n/2$, requires
  a succinct index with $r = \Omega\bigl(\frac{n \log t}{t}\bigr)$
  bits of redundancy.
\end{theorem}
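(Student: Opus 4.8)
The plan is to adapt Golynski's encoding/decoding argument for the binary case \cite{golynski07} to an arbitrary alphabet $\sigma \le n$, working with a hard instance on which either $\rank$ or $\select$ alone suffices to drive the bound. First I would fix the hard family $S$: since we only need $H(S) \ge n\log\sigma - \Theta(n)$, I would take, say, the set of all full strings (each character appearing exactly $n/\sigma$ times), which by Stirling has $H(S) = n\log\sigma - \Theta(n)$; alternatively the set of all strings works if we are content with $H(S)=n\log\sigma$ exactly. The core is an encoder that, for a given $s \in S$, uses the index $\mathcal I$ of $r$ bits together with the query algorithm $\mathcal A$ as a black box to write down a description of $s$ strictly shorter than $n\log\sigma - r$ whenever $r$ is too small. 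The saving comes from the following observation, which is the alphabet-lifted analogue of Golynski's: a query algorithm making at most $t$ probes per query must, over the course of answering all the ``relevant'' queries, leave a $1/t$-fraction of positions that are provably determined by the answers alone (hence need not be stored), because a conflict-free subset of the queries of size $\Omega(n/t)$ can always be extracted (an averaging/greedy argument: each probe ``blocks'' only $O(1)$ positions, each query probes only $t$ of them).

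Concretely, for $\select$ I would use the queries $q_{c,j}=\select(c,j)$ for $c\in[\sigma]$, $j\in[n/\sigma]$ --- all of which are stumbling, since $\select(c,j)=x$ immediately certifies $s[x]=c$ with no further probing --- and run the conflict-handling machinery (the greedy peeling of Lemma~\ref{lem:iterprocess} / Corollary~\ref{cor:qset}) to obtain a conflict-free set $Q^*$ of $\Omega(n/t)$ queries whose targets are distinct positions. For each such query the decoder need only be told which of the $t$ probes hit the target, costing $\log t$ bits instead of the $\log\sigma$ bits it would cost to store that character outright; summing over the $\Omega(n/t)$ queries in $Q^*$ this saves $\Omega\bigl(\tfrac{n}{t}(\log\sigma - \log t)\bigr)$ bits --- but note that when $\log t$ is close to $\log\sigma$ this degrades, which is exactly why the theorem only claims $\Omega\bigl(\tfrac{n\log t}{t}\bigr)$ and is weaker than eq.~\eqref{eq:redundancy} in that regime. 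To actually get $\Omega(n\log t / t)$ I would instead (following \cite{golynski07}) phrase the hard instance so that the relevant alphabet effective size at each blocked position is $\Theta(t)$ rather than $\sigma$ --- e.g. chop $s$ into blocks of length $t$ and argue per block, or restrict to a sub-alphabet of size $t$ per region --- so that the stored ``which probe'' index of $\log t$ bits is compared against a genuine $\log t$ bits of entropy that the algorithm's probes reveal for free; the net per-position saving is then a constant fraction of $\log t$, and over $\Omega(n/t)$ positions this is $\Omega(n\log t/t)$. The $\rank$ case is dual and can be handled either by the same conflict-free-subset argument applied to difference queries $\rank(c,p+1)-\rank(c,p)$ (which are stumbling/$1$-unique and pin down a character), or simply by the reduction noting a $\rank$ index yields a $\select$ index up to the stated blow-up; I would present whichever is shorter.

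The decoder side is routine once the encoding is fixed: it reconstructs the conflict-free set (encoded as a subset of the query space, $O(n)$ bits since there are $O(n)$ relevant queries), replays each query in canonical order against the already-recovered probed region, uses the $\log t$-bit hint to place the certified character at its target, and fills the remaining positions from the naive part of the encoding; conflict-freeness guarantees every replay stays inside the recorded probe region and no two queries fight over a position. Putting $\max_s E(s) + r \ge H(S)$ together with $E(s) \le n\log\sigma - \Omega(n\log t/t) + O(n)$ gives $r = \Omega(n\log t/t) - O(n)$, and since we are in the regime $t \le n/2$ the $O(n)$ additive term does not swamp the $\Omega(n\log t/t)$ main term (one checks $n\log t / t \ge c\cdot n$ fails for $t$ near $n$, so strictly one states the bound as $\Omega(n\log t/t)$ with the understanding it is only interesting for $t = n^{o(1)}$, or more carefully one carries the $-O(n)$ inside and notes the claim is vacuous-but-true when $\log t = O(t)$, i.e. always --- the honest statement is that the bound is $\Omega((n\log t)/t)$ and this is what the additive slack permits). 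The main obstacle I anticipate is precisely this bookkeeping of additive $O(n)$ terms versus the main term: getting the hard instance and the ``effective alphabet $t$'' trick calibrated so that the saving is a clean $\Omega(n\log t/t)$ and not merely $\Omega(n\log t/t) - O(n)$ with the two terms of comparable size; Golynski's original argument handles this for $\sigma=2$ by a careful choice of block structure, and lifting that choice faithfully to general $\sigma$ while keeping $H(S) = n\log\sigma - \Theta(n)$ is the delicate part.
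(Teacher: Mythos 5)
Your proposal does not follow the paper's route, and the route you do take contains a genuine gap that you yourself partially flag. The paper does \emph{not} use the conflict-free/stumbling-query encoding machinery of Section~\ref{sec:proving} here; instead it literally extends Golynski's decision-tree (cell-probe counting) argument from \cite{golynski07}. Concretely, the paper fixes $L=3\sigma t$, takes $S$ to be concatenations of $n/L$ chunks each containing exactly $3t$ copies of every symbol, poses the batch of $\gamma=n/(3t)$ $\select$ queries $\select(c,3ti)$, and builds a single composite decision tree with $2^r\sigma^{t\gamma}$ leaves. The whole argument then reduces to bounding $C^*$, the number of strings in $S$ compatible with one leaf. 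The crucial step is Lemma~\ref{lem:multinomial}: in any ``undetermined'' block the distribution of the unprobed characters is forced by the answers to be (nearly) uniform, so the number of ways to fill $u_i > \sigma t$ unprobed cells is bounded by a multinomial coefficient, which falls short of $\sigma^{u_i}$ by roughly a factor $t^{\sigma/2}$. Summing this deficit over the $\ge n/(2L)$ undetermined blocks is exactly where the $\Omega(n\log t/t)$ arises, and the only slack term is $n/L=n/(3\sigma t)=O(n/t)$, which is harmlessly dominated.

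Your plan, by contrast, runs the stumbling-query encoder and saves $\log\sigma-\log t$ bits per selected target, i.e.\ $\Omega\bigl(\frac{n}{t}(\log\sigma-\log t)\bigr)$ total. This is simply a \emph{different} quantity: it is stronger than the stated bound when $t=o(\log\sigma)$ but vanishes (or goes negative) once $t$ approaches $\sigma$, and it is identically useless for $\sigma=2$, a case the theorem explicitly covers. You acknowledge the mismatch and propose to repair it by ``restricting to a sub-alphabet of size $t$'' so that the per-target saving becomes a constant fraction of $\log t$. This repair does not work as stated: if the effective alphabet has size $t$, then storing the character naively also costs only $\log t$ bits, and your ``which of the $t$ probes hit the target'' hint costs $\log t$ bits as well, so the net saving is zero, not $\Theta(\log t)$. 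The paper's actual source of saving is not per-target certification at all; it is the \emph{constrained distribution} of the entire unprobed region (multinomial vs.\ unconstrained entropy), which is why it produces $\Omega(n\log t/t)$ even at $\sigma=2$. A second, structural problem with your route is the additive error: the encoding/decoding scheme of Section~\ref{sec:proving} unavoidably pays $\Theta(n)$ bits for the access pattern and query subsets, and for $t$ anywhere near $\log\sigma$ or larger this $\Theta(n)$ term swamps $n\log t/t$, so the claimed inequality cannot be salvaged by ``the additive slack permits it.'' The paper's decision-tree accounting avoids this because its only slack term is $n/L=n/(3\sigma t)$, not $\Theta(n)$.
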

Intuitively speaking, the technique is as follows: it first
creates a set of queries the data structure must answer and then partitions
the string into classes, driven by the algorithm behaviour. A bound on the
entropy of each class gives the bound. 
However, our technique proves that finding a set of queries
adaptively for each string can give an higher bound for $t = o(\log\sigma)$.

Before getting into the full details we prove a technical lemma that is based on the concept of \emph{distribution} of characters in a string:
Given a string $T$ of length $u$ over alphabet $\phi$, the distribution (vector) $\bm{d} $ for $u$ over $\phi$ is a vector in $\mathbb{N}^\phi$ containing the frequency of each
character in $T$. We can state:
\begin{lemma}
\label{lem:multinomial}
For any $\phi \geq 2$, $u \geq \phi$ and distribution $\bm{d}$ for $u$ on $\phi$,  it holds
$$\max_{\bm d} {u \choose{\bm{d}_1 \bm{d}_2 \cdots \bm{d}_{\phi}} } = \frac{u!}{\left(\frac{u}{\phi}!\right)^\phi} \leq \phi^{u}\left(\frac{\phi}{u}\right)^{\phi/2} \sqrt{\phi}.$$
\end{lemma}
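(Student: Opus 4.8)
The plan is to prove Lemma~\ref{lem:multinomial} by establishing the two parts separately: first the \emph{equality} $\max_{\bm d} \binom{u}{\bm d_1\,\bm d_2\cdots \bm d_\phi} = u!/((u/\phi)!)^\phi$, then the \emph{upper bound} on this quantity via Stirling's approximation. For the equality, the multinomial coefficient $\binom{u}{\bm d_1\cdots\bm d_\phi} = u!/(\bm d_1!\cdots \bm d_\phi!)$ is maximized exactly when the $\bm d_i$ are as equal as possible; I would argue this by a standard exchange (smoothing) argument: if some $\bm d_i \geq \bm d_j + 2$, then moving one unit from coordinate $i$ to coordinate $j$ strictly decreases the product of factorials in the denominator (since $\bm d_i! \,\bm d_j! > (\bm d_i-1)!\,(\bm d_j+1)!$ whenever $\bm d_i > \bm d_j + 1$), hence strictly increases the coefficient. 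Since $\phi \mid u$ can be assumed WLOG (the lemma statement writes $u/\phi$ as an integer), the unique balanced distribution is $\bm d_i = u/\phi$ for all $i$, giving the claimed closed form. I would note that this is really the reason the paper introduces the denseness/full-string constructions with equal character frequencies — the worst case for an encoder's entropy argument is the uniform one.

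For the inequality, I would apply Stirling in the two-sided form $\sqrt{2\pi m}\,(m/e)^m \leq m! \leq e\sqrt{m}\,(m/e)^m$ (or any convenient variant with explicit constants). Writing $m = u/\phi$, we get
\[
\frac{u!}{(m!)^\phi} \;\leq\; \frac{e\sqrt{u}\,(u/e)^u}{\bigl(\sqrt{2\pi m}\,(m/e)^m\bigr)^\phi}
\;=\; \frac{e\sqrt{u}}{(2\pi m)^{\phi/2}}\cdot \frac{(u/e)^u}{(m/e)^{u}}
\;=\; \frac{e\sqrt{u}}{(2\pi u/\phi)^{\phi/2}}\cdot \phi^{u},
\]
since $(u/e)^u/(m/e)^u = (u/m)^u = \phi^u$. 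It then remains to check that the prefactor $e\sqrt{u}\,/\,(2\pi)^{\phi/2}$ is bounded by $\sqrt\phi$ for the stated ranges $\phi \geq 2$, $u \geq \phi$; since $(2\pi)^{\phi/2} \geq 2\pi > e$ already for $\phi \geq 2$, the prefactor is at most $\sqrt{u}$, and I would either absorb the remaining slack into the $\sqrt\phi$ factor (observing $\sqrt u$ versus $\sqrt\phi\,(\phi/u)^{\phi/2}$ — wait, one must be careful here) or, more cleanly, keep $(\phi/u)^{\phi/2}$ explicit on the right-hand side exactly as written and verify that what is left, $e\sqrt{u}/(2\pi)^{\phi/2} \leq \sqrt\phi$ fails in general, so the $(\phi/u)^{\phi/2}$ term on the RHS is doing real work and must be matched against the $(2\pi u/\phi)^{-\phi/2}$ already produced.

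Being precise about that last point is where I expect the only genuine friction: one should line up $(2\pi u/\phi)^{-\phi/2}$ from the denominator of Stirling against the target $\phi^u (\phi/u)^{\phi/2}\sqrt\phi$, i.e. check $(2\pi)^{-\phi/2} (\phi/u)^{\phi/2} \cdot e\sqrt u \;\leq\; (\phi/u)^{\phi/2}\sqrt\phi$, which reduces to $e\sqrt u \leq (2\pi)^{\phi/2}\sqrt\phi$ — and this does hold for all $\phi\geq2$, $u\geq\phi$ only if one is more careful, since for $\phi=2$ it would demand $e\sqrt u \leq 2\pi\sqrt 2$, false for large $u$. Hence I would instead not cancel the $(m/e)^m$ factors so crudely: the correct bookkeeping keeps $u = m\phi$ and writes the whole expression in terms of $m$ and $\phi$, whereupon the genuinely large factor is $\phi^{m\phi}$ and the polynomial corrections are in $m$ (bounded, after taking the $\phi$-th power of the denominator, by $(2\pi m)^{\phi/2}$ downstairs and $e\sqrt{m\phi}$ upstairs), giving $e\sqrt{m\phi}\,/\,(2\pi m)^{\phi/2}\cdot \phi^{m\phi} = \phi^u\,(\phi/u)^{\phi/2}\cdot e\sqrt{u}\,(2\pi)^{-\phi/2}$, and then $e\sqrt u\,(2\pi)^{-\phi/2} \leq \sqrt\phi$ is \emph{false}, confirming that the lemma as literally stated must be using a sharper lower-Stirling constant or reading $(\phi/u)^{\phi/2}$ together with an implicit $O(1)$; the main obstacle, therefore, is pinning down exactly which elementary inequality chain makes the stated constants correct, and I would resolve it by carrying the explicit two-sided Stirling bounds through symbolically in $m$ and $\phi$ rather than prematurely simplifying, then verifying the residual numeric inequality on the boundary cases $\phi = 2$ and $u = \phi$.
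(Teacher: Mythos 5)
Your plan takes essentially the same route as the paper: identify the maximum at the uniform distribution, then apply two-sided Stirling. The only difference in the first half is that you argue the maximization by a smoothing/exchange argument, whereas the paper appeals to concavity and uniqueness of the maximizer of the multinomial; both yield the same closed form, and yours is the more explicit. What you add, and what you are right to be uneasy about, is the observation that the constants in the \emph{upper-bound} half do not check out — and in fact they do not.

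Carrying the Stirling bookkeeping through exactly as the paper does, after cancelling $e^{-u}$ one gets
\[
\frac{u!}{\bigl((u/\phi)!\bigr)^\phi}
\;\leq\; (2\pi)^{(1-\phi)/2}\,e^{1/(12u)}\;\phi^u\,\sqrt\phi\,\Bigl(\frac{\phi}{u}\Bigr)^{(\phi-1)/2},
\]
and the paper's final displayed line indeed records the exponent as $(\phi-1)/2$, not $\phi/2$; the closing phrase ``and the lemma follows'' silently replaces $(\phi/u)^{(\phi-1)/2}$ by $(\phi/u)^{\phi/2}$, which costs an extra factor $\sqrt{u/\phi}$ that $(2\pi)^{(1-\phi)/2}$ cannot absorb once $u/\phi$ exceeds $(2\pi)^{\phi-1}$. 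So the residual inequality you flag as the friction point is genuinely false, and no sharper Stirling constant rescues it: taking $\phi=2$, $u=100$ gives $\binom{100}{50}\approx 1.01\times 10^{29}$ while $\phi^u(\phi/u)^{\phi/2}\sqrt\phi = 2^{100}\cdot\frac{2}{100}\cdot\sqrt2\approx 3.6\times 10^{28}$, so the inequality as written already fails there. The correct conclusion of the Stirling computation is
\[
\frac{u!}{\bigl((u/\phi)!\bigr)^\phi}\;\leq\;\phi^u\,\Bigl(\frac{\phi}{u}\Bigr)^{(\phi-1)/2}\sqrt{\phi},
\]
absorbing $(2\pi)^{(1-\phi)/2}e^{1/24}<1$ for $\phi\ge2$, $u\ge2$. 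This weaker form is all that is used downstream: in the appendix's bound on $c_i^*$, the exponent simply becomes $(\sigma-1)/2$ instead of $\sigma/2$, and the final $\Theta\!\left(\frac{n}{t}\log t\right)$ estimate is unaffected. The right fix is therefore to correct the exponent in the lemma's statement, not — as you suggest at the end — to hunt for a tighter elementary chain that validates the current constants; there is none, and verifying boundary cases will only reconfirm the discrepancy.
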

\begin{proof}
The maximization follows from the concavity of the multinomial function and the uniquness of its maximum: the maximum is located at the uniform distribution $\bm d = (u/\phi, u/\phi, \ldots, u/\phi)$.
The upper bound arises from double Stirling inequality, as we have:
\begin{eqnarray*}
\frac{u!}{\left(\frac{u}{\phi}!\right)^\phi} &\leq& \frac{\sqrt{2\pi} u^{u+1/2} e^{-u + \frac{1}{12 u}}}{\left(\sqrt{2\pi}\right)^{\phi} (u/\phi)^{u+\phi/2} e^{-u + \frac{\phi}{12(u/\phi)+12}}} \\
&\leq& (2\pi)^{(1-\phi)/2}) u^{u+1/2} \left(\frac{\phi}{u}\right)^{u+1/2+(\phi-1)/2} \\
&=& O(2^{(1-\phi)/2}) \phi^{u} \phi^{1/2} \left(\frac{\phi}{u}\right)^{(\phi-1)/2}
\end{eqnarray*}
and the lemma follows.
\end{proof}
Let $L = 3\sigma t$ and assume for sake of simplicity that $L$ divides $n$. We start by defining the query set
$$\mathcal Q = \{ \select(c,{3 t i}) | c \in \left[\sigma\right] \land i \in \left[n/L\right] \}$$
having size $\gamma = \frac {n \sigma}{L} = \frac n {3t}$.
The set of strings on which we operate, $S$, is designed so that all queries in $\mathcal Q$ return a position in the set.
We build strings by concatenating $n/L$ \emph{chunks}, each of which is generated in all possible ways. A single chunk is built by aligning $3t$ occourrences of each
symbol in $\left[\sigma\right]$ and then permuting the resulting substring of length $L$ in any possible way.

A \emph{choices} tree for $\mathcal Q$ is a composition of smaller decision trees. At the top, we build the full binary tree of height $r$, 
each leaf representing a possible choices for the index bits values. For each leaf of $r$, we append the decision tree of our algorithm for the first query ${\mathcal Q}_1$
on every possible string conditioned on the choice of the index. 
The decision tree has height at most $t$ and each node has fan-out $\sigma$, being all possible results of probing a location of the string. 
Each node is labeled with the location the algorithm chooses to analyze, however we are not interested in this information. 
The decision tree has now $2^r \sigma^t$ leaves. At each leaf we append the decision tree for the second query ${\mathcal Q_2}$, increasing the number
of leaves again, and so on up to ${\mathcal Q_{\gamma}}$. 
Without loss of generality we will assume that all decision trees have height exactly $t$ and that each location is probed only once (otherwise
we simply remove double probes and add some padding ones in the end).
Leaves at the end of the whole decision tree are assigned strings from $S$ which are \emph{compatible} with the root-to-leaf path: each path
defines a set of answers $A$ for all $\gamma$ queries and a string is said to be compatible with a leave if the answers to $\mathcal Q$ on that string
is exactly $A$ and all probes during the path match the path. 
For any leaf $x$, we will denote the amount of compatible strings by $C(x)$. Note that the tree partitions the entire set
of strings, i.e. $\sum_{x \mbox{\scriptsize \ is a leaf}} C(x) = |S|$. Our objective is to prove that $C(x)$ cannot be too big, and
so prove that to distinguish all the answer sets the topmost tree must have at least some minimum height. 
More in detail, we will first compute $C^*$, an upper bound on $C(x)$ for any $x$, and then use the following relation to obtain the bound:
\begin{equation}
\label{eq:boundbase}
\log |S| = \log \sum_{x \mbox{\scriptsize \  is a leaf }} C(x) \leq \log (\mbox{\footnotesize \# of leaves}) + \log C^* \leq r+{t\gamma}\log \sigma+\log C^*
\end{equation}

Before continuing, we define some notation. For any path, the number of \emph{probed} locations is $t\gamma = n/3$, while the number of \emph{unprobed} locations is denoted by $U$.
We divide a generic string in some leaf $x$ into consecutive \emph{blocks} of characters defined depending on the answer set to $Q$ for that leaf, as follows.
The set $S_i \in \mathcal Q$ of $\sigma$ of
queries is defined as $S_i = \mathcal Q \cap \{ (c,x) \in \left[n\right] \times \left[\sigma\right] | x = i \}$; we define the block $B_i$ as the interval
$\left[\min_c A_x(S_i), (\min_c A_x(S_{i+1}))-1\right]$ (where $A_x$ defines the answer to a set of queries), i.e. the maximum span covered by answer set to $S_i$ in some leaf $x$.
Note that the partitioning in blocks is dependant only on $\mathcal Q$, i.e. it is typical of a leaf and not of a specific string, and that due
to our particular choice of $S$, the length $|B_i|$ is exactly $L$. Thus, the number of blocks is $n/L$.

We now associate a conceptual value $u_i$ to each block, which represents the
number of \emph{unprobed} characters in that block, so that $\sum_{i = 1}^{n/L} u_i = U$.
As in a leaf of the choices tree all probed locations have the same values, the only
degree of freedom distinguishing compatible strings between themselves lies in the
unprobed locations. 
We will compute $C^*$ by analyzing single blocks, and we will focus on the right side of the following:
\begin{equation}
\label{eq:cstar}
\frac{C^*}{\sigma^U} = c^*_1 c^*_2 \cdots c^*_{n/L} = \frac{g_1}{\sigma^{u_1}} \frac{g_2}{\sigma^{u_2}} \frac{g_3}{\sigma^{u_3}} \cdots \frac{g_{n/L}}{\sigma^{u_{n/L}}}
\end{equation}
where $g_i \leq \sigma^{u_i}$ represents the possible assignment of unprobed characters for block $i$ and $c^*_i$ the ratio $g_i/\sigma^{u_i}$.

We categorize blocks into two classes: \emph{determined} blocks, having $u_i < \sigma t$ and the remaining \emph{undetermined} ones.
For determined ones, we will assume $g_i = \sigma^{u_i}$. For the remaining ones we upper bound the possible choices by their maximum
value, i.e. we employ Lemma~\ref{lem:multinomial} to bound their entropy. Joining it with $u_i > \sigma t$ we obtain:
$$c^*_i \leq \frac{ \sigma^{1/2} \sigma^{u_i} \left(\frac{\sigma}{u_i}\right)^{\sigma/2} } {\sigma^{u_i}} \leq \sigma^{1/2} \left(\frac 1 t\right)^{\sigma/2}$$
The last step involves finding the number of such determined and undetermined blocks. As the number of global probes is at most $t\gamma = n/3$,
the maximum number of determined blocks (where the number of probed locations is $L-u_i > 2\sigma t$) is $(t\gamma) / (2\sigma t) = n / (2L)$.
The number of undetermined blocks is then at least $n/L - n/(2L) = n/(2L)$. Recalling that our upper bound increases with the number of
determined blocks, we keep it to the minimum. Therefore, we have:
\begin{equation}
\label{eq:cstar2}
\log C^* \leq U \log \sigma + \frac{n}{2L} \frac \sigma 2 \log \left(\frac 1 t\right) + \frac {n}{2L} \frac 1 2 \log \sigma = \Theta\left(\frac{n}{t} \log \left(\frac 1 t \right)\right)
\end{equation}
Joining Equation~\ref{eq:cstar2}, ~\ref{eq:boundbase} and the fact that $t\gamma + U = n$, we obtain that
$$n \log \sigma - \frac{n}{L} = \log |S| \leq r + t\gamma\log\sigma + U\log \sigma - \Theta\left(\frac{n}{t} \log  t\right)$$
and the bound follows.

\medskip
We can prove an identical result for operation $\rank$. The set $S$ of hard strings is the set of all strings of length $n$ over $\sigma$. We conceptually divide the strings in blocks of $L = 3\sigma t$ consecutive positions, starting at $0$. With this in mind, we define the set of queries
$$\mathcal Q = \{ \rank(c,iL) | c \in \left[\sigma\right] \land i \in \left[n/L\right] \},$$
i.e. we ask for the distribution of the whole alphabet every $L$ characters, resulting in a batch of $\gamma = \frac n {3t}$ queries. The calculations are then parallel to the previous case.
\qed

\end{document}